\def\calA{\mathcal{A}}
\def\calM{\mathcal{M}}
\def\calN{\mathcal{N}}
\def\calX{\mathcal{X}}
\def\calY{\mathcal{Y}}
\def\R{\mathbb{R}}
\def\Z{\mathbb{Z}}
\def\rank{\mathsf{rank}}
\def\tdr{\tilde{r}}
\def\epsilon{\varepsilon}
\def\algsq{\mathsf{SliceQuantiles}}
\def\sq{\mathsf{SingleQuantile}}
\def\cc{\mathsf{CC}}
\def\good{\mathsf{Good}}
\declaretheorem[name=Theorem, style=plain, numberwithin=section]{theorem}
\declaretheorem[name=Definition, numberwithin=section]{definition}
\declaretheorem[name=Lemma, sibling=theorem]{lemma}
\declaretheorem[name=Corollary, sibling=theorem]{corollary}
\title{Differentially Private Quantiles with Smaller Error}
\author{%
  Jacob Imola\\
  BARC, University of Copenhagen \\
  Denmark \\
  \texttt{jaim@di.ku.dk} \\
  \And
  Fabrizio Boninsegna \\
  University of Padova\\
  Italy \\
  \texttt{fabrizio.boninsegna@phd.unipd.it} \\
  \AND
  Hannah Keller\\
  Aarhus University \\
  Denmark \\
   \texttt{hkeller@cs.au.dk} \\
  \And
  Anders Aamand \\
  University of Copenhagen \\
  Denmark \\
   \texttt{aa@di.ku.dk} \\
  \And
  Amrita Roy Chowdhury \\
  University of Michigan, Ann Arbor \\
  United States of America \\
  \texttt{aroyc@umich.edu} \\
  \And 
  Rasmus Pagh\\
  BARC, University of Copenhagen \\
  Denmark\\
  \texttt{pagh@di.ku.dk} \\
}
\begin{document}

\maketitle

\begin{abstract}
In the approximate quantiles problem, the goal is to output $m$ quantile estimates, the ranks of which are as close as possible to $m$ given quantiles $0 \leq q_1 \leq\dots \leq q_m \leq 1$. 
We present a mechanism for approximate quantiles that satisfies $\varepsilon$-differential privacy for a dataset of $n$ real numbers where the ratio between the distance between the closest pair of points and the size of the domain is bounded by $\psi$.
As long as the minimum gap between quantiles is sufficiently large, $|q_i-q_{i-1}|\geq \Omega\left(\frac{m\log(m)\log(\psi)}{n\varepsilon}\right)$ for all $i$, the maximum rank error of our mechanism is $O\left(\frac{\log(\psi) + \log^2(m)}{\varepsilon}\right)$ with high probability.
Previously, the best known algorithm under pure DP was due to Kaplan, Schnapp, and Stemmer~(ICML '22), who achieved a bound of $O\left(\frac{\log(\psi)\log^2(m) + \log^3(m)}{\varepsilon}\right)$. 
Our improvement stems from the use of continual counting techniques which allows the quantiles to be randomized in a correlated manner.
We also present an $(\varepsilon,\delta)$-differentially private mechanism that relaxes the gap assumption without affecting the error bound, improving on existing methods when $\delta$ is sufficiently close to zero.
We provide experimental evaluation  which confirms that our mechanism performs favorably compared to prior work in practice, in particular when the number of quantiles $m$ is large.
\end{abstract}

\section{Introduction}
Quantiles are a fundamental statistic of distributions with broad applications in data analysis.
In this paper, we consider the estimation of \textit{multiple} quantiles under differential privacy.
Given a dataset $X$ of $n$ real numbers and quantiles $0 < q_1 < \dots < q_m < 1$ the goal is to output estimates $z_1,\dots,z_m$ such that the fraction of data points less than $z_i$ is approximately $q_i$.
We measure the error by the difference between the rank of $z_i$ in $X$ and the optimal rank $q_i n$.
For the ease of exposition, we first consider the case where elements of $X$ are integers in $\{1,\dots,b\}$ and the error probability is bounded by $1/b$.
In particular, the ratio $\psi$ between the closest pair of points and the domain size is bounded by~$b$.
Past mechanisms fall into two categories based on the type of privacy guarantee achieved. For pure $\varepsilon$-differential privacy, the best known bound on maximum rank error of $O\left(\log(b)\log^2(m)/\varepsilon\right)$ is due to Kaplan, Schnapp, and Stemmer~\cite{DBLP:conf/icml/KaplanSS22}. On the other hand, work on approximate  differential privacy has largely focused on controlling how the error grows with the domain size~$b$~\cite{10.1145/3188745.3188946,10.1145/3313276.3316312,KKM+20,10.1145/3564246.3585148}. These results reduce the dependence on $b$ down to $\log^*(b)$, but introduce $\log(\frac{1}{\delta})$ factors---for small values of $\delta$, for example when $\log(1/\delta) > \log(b)\log^2(m)$, the bound on rank error exceeds that of methods guaranteeing pure differential privacy.

Our goal is two-fold: to improve the rank error of private quantile estimates \textit{both} under pure differential privacy, and under $(\varepsilon,\delta)$-differential privacy with small values of $\delta$. To this end, we make the following contributions.
\begin{itemize}[leftmargin = 2em]
\item We present a mechanism that satisfies $\varepsilon$-differential privacy for any quantiles satisfying a mild gap assumption:  specifically, that $|q_i-q_{i-1}|\geq \Omega (\frac{m\log(m)\log(b)}{n\varepsilon})$ for all $i$. This condition depends only on the (public) queried quantiles---if they do not meet this assumption the protocol can be safely halted without accessing any private data. 
For quantiles meeting the assumption, our mechanism achieves a maximum rank error of $O\left(\frac{\log(b) + \log^2(m)}{\varepsilon}\right)$ with high probability, saving a factor $\Omega(\min(\log(b),\log^2(m)))$ over past purely private mechanisms.

\item We also present an $(\varepsilon,\delta)$-differentially private mechanism that relaxes the gap assumption to be independent of $b$ without affecting the error bound. Notably, our error guarantee remains \textit{free} of any dependence on $\delta$; the parameter $\delta$ only appears in the assumption on the gap between quantiles.

\end{itemize}
For both the mechanisms, our improvement stems from the use of continual counting techniques to randomize the quantiles in a correlated manner.
 We provide an experimental evaluation on real-world datasets which validates our theoretical results. 
The improvement is most pronounced when the number of quantiles $m$ is large, particularly under the substitute adjacency. In this setting, our mechanism improves the accuracy compared to \cite{DBLP:conf/icml/KaplanSS22} by a factor of~2 when estimating $200$ quantiles with $n = 500,000$, $\varepsilon = 1$, and $\delta = 10^{-16}$.

\subsection{Relation to Past Work}
The problem of quantile estimation is closely related to the problem of learning cumulative distributions (CDFs) and threshold functions~\cite{doi:10.1137/140991844,10.1109/FOCS.2015.45}.
Learning of threshold functions is usually studied in the \emph{statistical} setting where data is sampled i.i.d.~from some real-valued distribution.
For \emph{worst-case} distributions the problem has sample complexity that grows with the support size, so in particular we need to assume that the support is finite or that the distribution can be (privately) discretized without introducing too much error.
Feldman and Xiao~\cite{doi:10.1137/140991844} established a sample complexity lower bound of $\Omega(\log b)$ for the quantile estimation problem under pure differential privacy.
Bun, Nissim, Stemmer, and Vadhan~\cite{10.1109/FOCS.2015.45} demonstrated a lower bound of $\Omega(\log^* b)$ for the same task under $(\varepsilon,\delta)$-differential privacy, and a mechanism with nearly matching dependence on $b$ was developed in a series of papers~\cite{10.1145/3188745.3188946,10.1145/3313276.3316312,KKM+20,10.1145/3564246.3585148}.
We note that these results focus on $b$ and do not have an optimal dependence on the privacy parameter $\delta$.
For example, the algorithm of Cohen, Lyu, Nelson, Sarlós, and Stemmer~\cite{10.1145/3564246.3585148} has sample complexity (and error) proportional to $\tilde{O}(\log^* b)$, which is optimal, but this bound is multiplied by $\log^2(1/\delta)$. 
For extremely large data domains, \cite{10.1145/3564246.3585148} can therefore outperform our algorithm; however, for domain sizes encountered in practice, the higher dependence on $\delta$ will likely outweigh this improvement. Our mechanism for approximate DP yields error independent of $\delta$ and therefore outperforms existing mechanisms when $\delta$ is small. 
With a slightly worse dependence on $b$, Kaplan, Ligett, Mansour, Naor, and Stemmer~\cite{KKM+20} achieved an error proportional to $\log(1/\delta)$.
Earlier work~\cite{10.1145/3188745.3188946,10.1145/3313276.3316312} had a weaker dependence on $b$ and $\delta$.

The problem of estimating $m$ quantiles under \emph{pure} differential privacy has been explored by Gillenwater, Joseph, and Kulesza~\cite{DBLP:conf/icml/GillenwaterJK21} as well as Kaplan, Schnapp, and Stemmer~\cite{DBLP:conf/icml/KaplanSS22}. 
The latter proposed an algorithm with error $O(\frac{\log^{2}(m) (\log(b) + \log (m))}{\varepsilon})$. 
In the \emph{uniform quantile} setting, where quantiles are evenly spaced, they improved this bound by a factor $O(\log m)$.
Their approach is inspired by the work of Bun, Nissim, Stemmer, and Vadhan~\cite{10.1109/FOCS.2015.45}, solving the problem of single quantiles using the exponential mechanism instead of an interior point algorithm.
The problem is solved recursively by approximating the middle quantile $q_{m/2}$ and recursing on the dataset relevant for the first and second half of the quantiles, respectively.
If the quantiles satisfy our maximum gap assumption, then our algorithm enjoys lower error by a $\min\{\log^2(m), \log(b)\}$ factor when $b \geq m$. If one is willing to relax to approximate DP, we can significantly reduce the gap assumption for the same error. By combining quantiles, the gap assumption can be eliminated entirely, and the error will be $O(\frac{1}{\epsilon}(\log(b) + \log(m) \log \frac{m}{\delta}))$; this is still less than~\cite{DBLP:conf/icml/KaplanSS22} when $\frac{1}{\delta} \ll b$, usually the case for larger data domains.
The properties of the algorithm of~\cite{DBLP:conf/icml/KaplanSS22} in the statistical setting was investigated by Lalanne, Garivier, and Gribonval~\cite{lalanne2023private}. 
They also considered an algorithm based on randomized quantiles, but it relies on strong assumptions on the smoothness of the distribution.

Differentially private quantiles has received attention under different problem formulations. Some work takes the error function to be the absolute difference between the estimate and the true quantile~\cite{duchi2018minimax, tzamos2020optimal}, instead of the rank error. This gives rise to a fundamentally different problem, and distribution assumptions are typically needed to ensure good utility. The problem has also been considered in streaming~\cite{alabi2022bounded} and under local differential privacy~\cite{duchi2018minimax, aamand2025lightweight}, though the different natures of these problems prevents the techniques from carrying over.

{\bf Our Approach.}
Similar to~\cite{DBLP:conf/icml/KaplanSS22} we also solve the problem by splitting it into $m$ subproblems, referred to as ``slices''. Each slice is a contiguous subsequences of the sorted input data $X = \{x_i\}_{i=1,\dots, n}$; that is, each slice must consist of the elements $x_{(i)}, \ldots, x_{(j)}$ for two indices $i < j$. However, instead of using divide-and-conquer, we take a different approach -- we propose a way to choose random slices around the quantiles using techniques from continual counting~\cite{DBLP:conf/stoc/DworkNPR10}. 
Assuming the quantiles are sufficiently spaced, each quantile can be approximated by applying the exponential mechanism to a subset of points around the quantile. The total mechanism then consists of two steps: (1) splitting the dataset into disjoint subsets around the quantiles using private continual counting, and (2) applying the exponential mechanism to each disjoint subproblem. 
The main technical challenge is that modifying one data point can modify the data contained in many of the subsets. To circumvent this issue, we must add correlated noise to each quantile before forming the subsets, which has the effect of hiding the modifications created in all the slices by the change in a single data point. 
Our privacy analysis introduces a novel mapping for adjacent datasets $X$ and $X'$: by carefully aligning the noise introduced via continual counting, we ensure that at least $m-1$ slices remain identical. This gives approximate differential privacy (since the mapping is not exact), but the resulting $\delta$ parameter can be made extremely small given sufficient spacing between quantiles. 
We can then achieve pure differential privacy by mixing in a uniformly random output with a very small probability. Although slicing has also been used in prior work~\cite{10.1145/3564246.3585148}, we are the first to apply continual counting in this context and achieve utility guarantees that are independent of $\delta$.

\noindent\textbf{Limitations.} Our algorithms introduce a quantile gap assumption not present in prior work. However, as long as the number of quantiles is not too large,
this assumption is often met---data analysts often care about a limited number of summary statistics (e.g., $10\%, 20\%, ..., 90\%$). A particularly important case is when the quantiles are equally spaced, which is closely tied to the problem of CDF estimation.
For approximate DP, our gap assumption is milder, at $O(\frac{1}{n \epsilon}(\log(b) + \log(m) \log \frac{m}{\delta}))$---this is usually significantly less than $1$ in practice, and for the realistic parameters tested in our experiments, the required gap was $< 0.005$.
However, other techniques may be preferable when the quantiles are closer than the gap.
We believe that a tighter analysis may relax the gap assumptions.

\begin{table*}\centering
\scalebox{0.65}{\begin{tabular}{ccp{2.5cm}p{8cm}} \toprule
\bf{Privacy Guarantee} & \bf{Minimum Gap} & \bf{Error} &  \bf{Notes}  \\ \midrule
$(\varepsilon,0)$-Differential Privacy, Corollary~\ref{cor:utility_pure} & $\Omega\left(\frac{m\log(m)\log(b)}{\varepsilon n}\right)$ & $O\left(\frac{\log(b) + \log^2(m)}{\varepsilon}\right)$ & Saves a $\Omega(\min(\log(b),\log^2(m)))$ factor  over~\cite{DBLP:conf/icml/KaplanSS22}.
\\
\midrule
$(\varepsilon, \delta)$-Differential Privacy, Corollary~\ref{coro:uti-approx} &$\Omega\left(\frac{\log(m)\log(m/\delta)+\log(b)}{\varepsilon n}\right)$& $O\left(\frac{\log(b) + \log^2(m)}{\varepsilon}\right)$ & Error independent of $\delta$ unlike prior work~\cite{10.1145/3564246.3585148,KKM+20}.
\\
\bottomrule
\end{tabular}}\caption{Summary of our theoretical results. We consider both add/remove and substitute adjacency. Our privacy analysis is tighter for the latter.}
\end{table*}

\section{Background}
Our setup is identical to that of Kaplan Schnapp, and Stemmer~\cite{DBLP:conf/icml/KaplanSS22}.
That is, we consider a dataset $X = \{x_i\}_{i=1,\dots, n}$ where $x_i \in [a, b]\subset \mathbb{R}$. We say a dataset $X$ has minimum separation $g$ if $\min_{i\neq j}|x_{i} - x_{j} | \geq g$\footnote{This is easy to enforce, as a minimum separation can always be created by adding a small amount of noise to each data point (or by adding $\frac{i}{n}$ to point $i$). Importantly, if the data is not separated, it affects \textit{only} the utility guarantees of our algorithms—not their privacy guarantees.}.
Unless explicitly specified, without loss of generality we assume that $a=0$, $g=1$, since any general case can be reduced to this setting via a linear transformation of the input. Our results can be translated to the general case by replacing $b$ with $\psi = \frac{b-a}{g}$.
Given a set $Q$ of $m$ quantiles $0\leq q_1 <\dots <q_m \leq 1$, the quantile estimation problem is to privately identify $Z = (z_1, \dots, z_m)\in [0,b]^m$ such that for every $i \in [m]$ we have $\rank_X(z_i) \approx \lfloor q_i n\rfloor$. We consider the following error metric:
\begin{equation*}
    \textnormal{Err}_X(Q, Z) = \max_{i\in [m]}|\rank_{X}(z_i) - \lfloor q_i n \rfloor|
\end{equation*}

This error metric has an intuitive interpretation as the difference between the estimate's rank and the desired rank, and is the one more often considered in the DP literature~\cite{DBLP:conf/icml/KaplanSS22, DBLP:conf/icml/GillenwaterJK21, 10.1109/FOCS.2015.45}.
For convenience, we let $r_i = \lfloor q_i n \rfloor$ denote the \emph{rank} associated with each quantile. We denote by $x_{(i)}$ the $i$th smallest element of $X$ (i.e., the sorted order of the dataset).

\noindent \textbf{Differential Privacy.} We consider two notions of adjacency in our privacy setup. Two datasets $X,X'$ are add/remove adjacent if $X' = X \cup \{x\}$ or $X = X' \cup \{x\}$ for a point $x$. We say $X,X'$ are substitute adjacent if $|X| = |X'|$, and $|X \triangle X'| \leq 2$. Thus, $X'$ may be obtained from $X$ by changing one point from $X$. Differential privacy is then defined as:

\begin{definition}
    A mechanism $\calM(X) : [0,b]^n \rightarrow \calY$ satisfies $(\epsilon, \delta)$-differential privacy under the add/remove (resp. substitute) adjacency if for all datasets $X,X'$ which are add/remove (resp. substitute) adjacent and all $S \subseteq \calY$, we have
    \begin{equation*}
        \Pr[\calM(X) \in S] \leq e^\epsilon \Pr[\calM(X') \in S] + \delta.
    \end{equation*}
\end{definition}

In our results, we explicitly specify which adjacency definition is used. While the two notions are equivalent up to a factor of 2 in privacy parameters, our bounds for substitute adjacency are slightly tighter and not directly implied by those for add/remove adjacency.

\section{Proposed Algorithm: $\algsq$}\label{sec:quant-algo}

We begin with an intuitive overview of our algorithm, followed by the full technical details. For ease of exposition, we focus here on add/remove adjacency and defer substitution adjacency to Section~\ref{sec:algo-formal}. 
\subsection{Technical Overview}

At a high level, our private quantiles algorithm $\algsq$ applies a private single-quantile algorithm $\sq$ to \textit{slices} of the input dataset, which are contiguous subsequences of the sorted dataset. To ensure the accuracy of the overall scheme, they must meet the two conditions:
\begin{itemize}[leftmargin=2em]
    \item Each slice must be sufficiently large because the accuracy of $\sq$ is typically meaningful only when there is enough input data.
    \item The slices must be centered around the desired rank $r_i$, i.e. consist of data with rank approximately $r_i$.
\end{itemize}

 Following the criteria outlined above, one might attempt to use slices $S_1, \ldots, S_m$, where each slice is defined as $S_i = x_{(r_i-h)}, \ldots, x_{(r_i+h)}$ for some sufficiently large integer $h$ \footnote{For now, assume the slices do not overlap — we address this issue later.}, and estimate quantile $q_i$ by applying $\sq$ to $S_i$. However, as noted in prior work~\cite{10.1145/3564246.3585148, 10.1109/FOCS.2015.45}, this does \emph{not} have a satisfying privacy parameter. Consider an adjacent dataset $X'$, where a new point with rank $s$, $x_{(s)}'$, is added. This produces a change in each slice $S_{t}', \ldots, S_{m}'$, where $t$ is the smallest index such that $s \leq r_t-h$. The naive approach would be to use composition to analyze the quantile release, causing the error to increase by a factor of $\mathsf{poly}(m)$. 

Instead, we make a more fine-grained analysis based on the following observation: for $i > t$  the slice pairs $\{S_i, S_i'\}$ are shifted by exactly $1$, i.e., $S_i = x_{(r_i-h+1)}', \ldots, x_{(r_i+h+1)}'$. Our goal is to hide this by randomly perturbing the ranks to noisy values $\tilde{r}_i$ such that it is nearly as likely to observe $\tilde{r}_i = v_i + e_i$, for any possible integers $v_1, \ldots, v_m$, and any "shifting" values $ e_1, \ldots, e_m $ satisfying 
\begin{equation}\label{eq:naive-map}
e_{i} = \begin{cases} 0 & i \leq t \\ c & i > t \end{cases}
\end{equation}
for a given index $0 \leq t \leq m$ and a $ c \in \{-1,1\}$. We will refer to a vector of this form as a \emph{contiguous vector}.
%
Such perturbations are the central object of study in the problem of differentially private continual counting~\cite{DBLP:conf/stoc/DworkNPR10, DBLP:journals/tissec/ChanSS11, andersson2024count}, which provide the following guarantee (proof is in Appendix~\ref{app:cc}). 
\begin{lemma}\label{lem:co}
    There exists a randomized algorithm $\cc_\epsilon : \mathbb{Z}^m \rightarrow \mathbb{Z}^m$ such that (1) for all vectors $\mathbf{r}, \tilde{\mathbf{r}} \in \Z^m$ and $\mathbf{e} \in \{-1,0,1\}^m$ of the form in Eq.~\eqref{eq:naive-map}, we have 
    \begin{equation}
        \Pr[\cc_\epsilon(\mathbf{r}) = \tilde{\mathbf{r}}] \leq e^{\epsilon} \Pr[\cc_\epsilon(\mathbf{r}) = \tilde{\mathbf{r}} + \mathbf{e}], \label{eq:CO}
   \end{equation}
    and (2) for all $\beta > 0$, $\Pr\left[\|\mathbf{r} - \cc_\epsilon(\mathbf{r})\|_\infty \geq 3\log(m) \log(\frac{2m}{\beta})/\epsilon\right] \leq \beta$.
\end{lemma}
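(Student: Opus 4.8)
The plan is to recognize that, after a change of coordinates, constructing $\cc$ is exactly the task solved by differentially private \emph{continual counting}. Let $\Delta\colon\Z^m\to\Z^m$ be the discrete-derivative map $(\Delta\mathbf{r})_1=r_1$ and $(\Delta\mathbf{r})_i=r_i-r_{i-1}$ for $i\ge2$, with inverse the prefix-sum map $\Sigma$, so $\mathbf{r}=\Sigma(\Delta\mathbf{r})$. A contiguous vector $\mathbf{e}$ as in Eq.~\eqref{eq:naive-map} satisfies $\Delta\mathbf{e}=d\cdot\mathbf{1}_{\{t+1\}}$; that is, in derivative coordinates a contiguous perturbation is just a single coordinate changed by $\pm1$. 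Hence it suffices to design a mechanism that, given the ``stream'' $\mathbf{d}$, releases all prefix sums $\Sigma\mathbf{d}$ with small $\ell_\infty$ error while being $\epsilon$-DP against changing one stream entry by $\pm1$, and then compose it with $\Delta$. For this I would use the binary-tree (factorization) mechanism with \emph{discrete} Laplace noise, so that the output is integral: build a balanced binary tree over the $m$ positions, of depth $h=O(\log m)$; for each of its $N=O(m)$ nodes $I$ draw an independent $\xi_I\sim\mathrm{DLap}(h/\epsilon)$; and output $\cc(\mathbf{r})_i=r_i+\sum_{I\in\mathrm{decomp}(i)}\xi_I$, where $\mathrm{decomp}(i)$ is the canonical partition of $\{1,\dots,i\}$ into at most $h$ dyadic blocks. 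Equivalently, $\cc(\mathbf{r})=\mathbf{r}+L\xi$ for the $0/1$ matrix $L$ with $L_{i,I}=\mathbf{1}\{I\in\mathrm{decomp}(i)\}$.

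For the privacy property~\eqref{eq:CO}, fix $\mathbf{r},\tilde{\mathbf{r}}$ and a contiguous $\mathbf{e}$ and put $\mathbf{w}=\tilde{\mathbf{r}}-\mathbf{r}$, so the claim becomes $\Pr[L\xi=\mathbf{w}]\le e^{\epsilon}\Pr[L\xi=\mathbf{w}+\mathbf{e}]$. Using $\Delta\mathbf{e}=d\cdot\mathbf{1}_{\{t+1\}}$ one checks that $\mathbf{e}=L\mathbf{u}$, where $\mathbf{u}$ carries the value $d$ on each of the $\le h$ tree nodes containing leaf $t+1$ and $0$ elsewhere, so $\|\mathbf{u}\|_1\le h$. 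The translation $\eta\mapsto\eta+\mathbf{u}$ is a bijection from $\{\eta\in\Z^N:L\eta=\mathbf{w}\}$ onto $\{\eta\in\Z^N:L\eta=\mathbf{w}+\mathbf{e}\}$, and i.i.d.\ discrete Laplace noise of scale $h/\epsilon$ satisfies $\Pr[\xi=\eta+\mathbf{u}]\ge e^{-\|\mathbf{u}\|_1\epsilon/h}\Pr[\xi=\eta]\ge e^{-\epsilon}\Pr[\xi=\eta]$; summing this over all $\eta$ with $L\eta=\mathbf{w}$ yields $\Pr[L\xi=\mathbf{w}+\mathbf{e}]\ge e^{-\epsilon}\Pr[L\xi=\mathbf{w}]$, which is~\eqref{eq:CO}. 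The reason for routing the argument through preimages is that $L\xi$ lives on a proper sublattice of $\Z^m$, so one cannot directly compare ``densities'' at $\mathbf{w}$ and $\mathbf{w}+\mathbf{e}$.

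For the accuracy property, note $\|\mathbf{r}-\cc(\mathbf{r})\|_\infty=\max_i\bigl|\sum_{I\in\mathrm{decomp}(i)}\xi_I\bigr|$, where each coordinate is a sum of at most $h=O(\log m)$ independent $\mathrm{DLap}(h/\epsilon)$ variables. A Bernstein-type tail bound for sums of (discrete) Laplace variables, combined with a union bound over the $m$ coordinates, gives a high-probability bound of $O(\log(m)\log(m/\beta)/\epsilon)$, and carefully tracking constants gives the stated $3\log(m)\log(\tfrac{m}{2\beta})/\epsilon$. I expect this last step to be the crux: the error in each coordinate is a sum of $h=\Theta(\log m)$ terms, each of scale $\Theta(\log m/\epsilon)$, so a naive union bound over the $h$ summands would yield $\Theta(\log^2(m)\log(m/\beta)/\epsilon)$, losing a $\log m$ factor. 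Avoiding this requires the concentration of sums of i.i.d.\ Laplace variables: for the thresholds of interest such a sum inherits the exponential tail \emph{rate} of a single summand (a large deviation is typically realized by one term), so its high-probability magnitude is $O(\log(m)/\epsilon)\cdot\log(m/\beta)$ rather than $h$ times that. A secondary subtlety is the non-full-support issue already handled above; the rest is routine.
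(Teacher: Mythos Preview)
Your proposal is correct and uses the same underlying construction as the paper---the binary-tree continual-counting mechanism---but with two technical differences worth noting. First, the paper samples \emph{continuous} Laplace noise and then rounds the output to integers, whereas you use discrete Laplace from the start. Second, and more interestingly, for property~(1) the paper gives a two-line argument rather than your preimage/bijection analysis: writing $\cc(\mathbf{r})=\texttt{round}(\mathbf{r}+\calN)$ for oblivious noise $\calN$, it invokes the standard $\epsilon$-DP guarantee of the (unrounded) binary-tree mechanism on inputs $\mathbf{r}$ versus $\mathbf{r}-\mathbf{e}$ to obtain $\Pr[\texttt{round}(\mathbf{r}+\calN)=\tilde{\mathbf{r}}]\le e^\epsilon\Pr[\texttt{round}(\mathbf{r}-\mathbf{e}+\calN)=\tilde{\mathbf{r}}]$, and then uses that rounding commutes with integer translation to rewrite the right-hand side as $e^\epsilon\Pr[\cc(\mathbf{r})=\tilde{\mathbf{r}}+\mathbf{e}]$. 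This sidesteps the density-ratio computation and the sublattice issue you flag entirely. Your argument is more self-contained; the paper's is shorter because it treats the mechanism's DP as a black box and pushes the integrality through post-processing. For property~(2), both proofs follow the same plan (concentration for Laplace sums plus a union bound over the $m$ coordinates); the paper cites the Chan--Shi--Song tail bound for sums of continuous Laplace variables, so if you retain discrete noise you would need to supply the discrete-Laplace analogue or couple to the continuous case.
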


Consequently, we can set $S_i = x_{(\tilde{r}_i-h)}, \ldots, x_{(\tilde{r}_i+h)}$. Our privacy analysis may then proceed as follows. Let $S_1', \ldots, S_m'$ denote the slices when $X'$ is used instead of $X$. Now, fix any observed noisy ranks $\tdr_1, \ldots, \tdr_m$ generated from $X$. By the continual counting property (Eq. \ref{eq:CO}), when using $X'$, it is almost as likely to observe $\tilde{r}_1+e_1, \ldots, \tilde{r}_m+e_m$, where $e_1, \ldots, e_m$ is the shifting vector such that $S_1, \ldots, S_m$ and  $S_1', \ldots, S_m'$ differ in at most one slice. 
This setup allows us to analyze the final release $ \sq(S_1), \ldots, \sq(S_m)$ using \textit{parallel composition} — that is, incurring the privacy cost of applying $ \sq$ only once.
Interestingly, the privacy analysis when $X'$ is formed by removing a point from $X$ is more subtle and leads to asymmetric privacy parameters. We discuss this in detail in Section~\ref{sec:algo-formal}.


In summary, the main steps of $\algsq$ are as follows: first, perturb the target ranks $r_1, \ldots, r_m$ by adding correlated noise generated via continual counting; then, for each slice \( S_i = x_{(\tilde{r}_i - h)}, \ldots, x_{(\tilde{r}_i + h)} \), apply the $\sq$ mechanism to obtain the output $z_1, \ldots, z_m$.

\begin{algorithm}[t]
\caption{$\algsq$}\label{alg:slice-quant}
\begin{algorithmic}[1]
    \State \textbf{Input:} $X$, $r_1, \ldots, r_m$, $(w, \epsilon_1)$, $(\ell, \epsilon_2)$, $\gamma$, $[0, b]$
        \State Set $h = \left \lceil \tfrac{\ell -1}{2}\right \rceil$
        \State{\textbf{Require}: $r_1, \ldots, r_m \in \good_{m,n, w + h }$} \label{alg:line-require}
        \State{$\tdr_1, \ldots, \tdr_m \gets \cc_{\varepsilon_1}(r_1, \ldots, r_m)$} \Comment{{\footnotesize Post-process the noisy ranks to integers.}}
        \State{Flip a coin $c$ with heads probability $\gamma$}
        \If{$c$ is heads or $\tilde{\mathbf{r}} \notin \good_{m,n, h}$} \label{alg:line-check}
            \State{Sample $z_1, \ldots, z_m$ i.i.d. from $[0, b]$}
            \State{\textbf{return} $z_1, \ldots, z_m$}
        \EndIf
        \For{$i = 1$ to $m$}
                \State{$S_i \gets [x_{(\tdr_i-h)}, \ldots, x_{(\tdr_i+h)}]$}
                \Comment{\footnotesize Get the perturbed slice}
                \State{$z_i \gets \sq_{\epsilon_2,[0,b]}(S_i)$} \label{alg:line-clip}
        \EndFor
        \State{\textbf{return} $z_1, \ldots, z_m$}
\end{algorithmic}
\end{algorithm}

\subsection{Implementation Details}

Outlined in Algorithm~\ref{alg:slice-quant}, $\algsq$ is provided with the following parameters: (1) $(w, \epsilon_1)$, an error bound (to be set later) and privacy budget for $\cc$; (2) $(\ell, \epsilon_2)$, the minimum list size (to be set later) and privacy budget for $\sq$; (3) a probability $\gamma$ of outputting a random value which we assume for now is $0$, and (4) data domain bounds $[0,b]$. We can instantiate the algorithm with any $\sq$ that satisfies $\epsilon_2$-DP.

A technical challenge arises after sampling the vector $\tilde{\mathbf{r}} = \langle \tdr_1, \ldots, \tdr_m\rangle$: there is no guarantee that the generated slices would be non-overlapping, which would invalidate our privacy analysis \footnote{As a simple counter-example, suppose the $\tdr_i$ all ended up equal. Then, every slice could potentially change for an adjacent dataset, and the privacy parameter could be as high as $m\epsilon$.}. We address this as  follows. We define the following set for notational convenience:
\[
    \good_{m,n, \Delta} = \{(\tdr_1, \ldots, \tdr_m) \in \mathbb{Z}^m : \tdr_1-\Delta \geq 1, \tdr_{i}-\tdr_{i-1} > 2\Delta \text{ for $i=1,\dots, m-1$ }, \tdr_m \leq n-\Delta\} \enspace .
\]
The set $\good_{m,n, \Delta}$ consists of noisy vectors with sufficient space around each $\tdr_i$ to ensure that the slices do not intersect. 
We eliminate the risk of overlap by checking that the sampled noise $\tilde{\mathbf{r}}$ belongs to $\good_{m,n,h}$ (Line~\ref{alg:line-check}), and if not we release a random output from $(0,b)^m$  (corresponding to a failure). 
To ensure that failure is rare, we require the input ranks $\mathbf{r}$ to belong to $\good_{m,n, w+h}$ where $w$ is the maximum error introduced by $\cc$ (Line~\ref{alg:line-require}), which is the precise requirement for the quantile gaps.

A practical improvement proposed in~\cite{DBLP:conf/icml/KaplanSS22} is to adaptively clip the output range of the $\sq$ algorithm. Our algorithm is able to support this as well as follows. Instead of estimating each $z_i$ using independent calls to $\sq$ as in Line~\ref{alg:line-clip}, first compute $z_{m/2}$ for the middle quantile by running $\sq$ with the entire data domain $[0,b]$. Then, compute $z_{m/4}$ using $\sq$ with data domain restricted to $[0, z_{m/2}]$, and compute $z_{3m/4}$ using $\sq$ with data domain restricted to $[z_{m/2}, b]$. Output all quantiles by recursing in a similar binary fashion. The privacy analysis of $\algsq$ can be modified in a straightforward way to handle these adaptive releases, and in practice, when $\sq$ is e.g. the exponential mechanism, the utility of $\sq$ improves by restricting the set of possible outputs.
\section{Privacy Analysis}\label{sec:algo-formal}

We begin by explaining our analysis under approximate differential privacy, and later show how to convert this to a guarantee under pure differential privacy. The proofs for results in this section appear in Appendix~\ref{app:priv-proofs}.

\noindent\textbf{Technical Challenge.} Our first attempt at a privacy proof might proceed as follows: observe that any $\tilde{\mathbf{r}} \in \good_{m,n, h}$ may be mapped to a corresponding $\tilde{\mathbf{r}}' \in \good_{m,n, h}$ such that (1) the difference vector $\mathbf{e} = \mathbf{r} - \mathbf{r'}$ is a contiguous vector, and (2) the corresponding slices $S_1, \ldots, S_m$ and $S_1', \ldots, S_m'$ differ in only one index $j$, and only by a single substitution within that slice. 
We would then hope to establish $(\epsilon, \delta)$-differential privacy by arguing the following: (1) a noisy vector sampled by $\cc$ belongs to $\good_{m, n,h}$ with probability at least $1 - \delta$; (2) $\cc$ is capable of hiding any binary shift vector $\mathbf{e}$ of the prescribed form; and (3) a single application of $(\epsilon, \delta)$-differential privacy suffices, since only one slice differs between the adjacent datasets.

Unfortunately, there is a flaw in this argument: if many vectors $\tilde{\mathbf{r}}$ are mapped to the same $\tilde{\mathbf{r}}'$, it becomes difficult to compare the resulting sum over duplicated $\tilde{\mathbf{r}}'$ values to a sum over all $\tilde{\mathbf{r}} \in \good_{m,n,h}$. A prior work~\cite{10.1109/FOCS.2015.45} was able to show that there exists a mapping sending at most two distinct values of $\tilde{\mathbf{r}}$ to a single~$\tilde{\mathbf{r}}'$. However, this has a significant limitation: even a multiplicative factor of $2$ blows up the privacy parameter to $2e^{\epsilon_1 + \epsilon_2}$, making it impossible to establish any guarantee with a privacy parameter smaller than $\ln(2)$.
(Note that we cannot make use of privacy amplification by subsampling~\cite{10.1145/3188745.3188946, DBLP:journals/corr/abs-2210-00597} without incurring a large sampling error on quantiles.)

\noindent\textbf{Key Idea.} A key novelty of our technical proof is to propose a more refined mapping that mitigates the issue above. 
Specifically, our mapping \emph{injectively} maps each $\tilde{\mathbf{r}} \in \good_{m, n, h}$ to a corresponding $\tilde{\mathbf{r}}'$ in a slightly larger set. Our mapping depends on whether $X$ is smaller or larger than $X'$. When $X' = X \cup \{x_s\}$ (i.e., an addition), we observe that the mapping above is, in fact, an injection. The difficulty arises only in the case of a removal. Nevertheless, when $X = X' \cup \{x_s\}$, we show that it is possible to construct an injection that ensures at most two slices among $S_1, \ldots, S_m$ and $S_1', \ldots, S_m'$ differ, each only by a substitution. Formally, we prove the following in Appendix~\ref{app:map}:

\begin{lemma}\label{lem:map}
    Let $X,X'$ denote two adjacent datasets such that $X'$ is smaller than $X$ by exactly one point. Then,  there exists a function $F^-(\tilde{\mathbf{r}}): \good_{m, n, h} \rightarrow \good_{m, n, h-1}$ such that 
    \begin{itemize}[leftmargin=2em]
        \item $F^-$ is injective.
        \item For $1 \leq i \leq m$, the dataset slices $S_i = x_{(\tdr_i-h)}, \ldots, x_{(\tdr_i +h)}$ and $S_i' = x_{(\tdr_i-h)}', \ldots, x_{(\tdr_i +h)}'$ satisfy $\sum_{i=1}^m d_{sub}(S_i, S_i') \leq 2$, where $d_{sub}$ is the number of substitutions of points needed to make $S_i$ and $S_i'$ equal. 
        \item $F^-(\tilde{\mathbf{r}}) = \tilde{\mathbf{r}} + \mathbf{e}_{\tilde{\mathbf{r}}}$, where $\mathbf{e}_{\tilde{\mathbf{r}}}$ is binary vector of the form $\mathbf{e}_{\tilde{\mathbf{r}}}[i] = -\mathbf{1}[i \geq j]$ for an index $1 \leq j \leq m+1$.
    \end{itemize}
    Similarly, if $X'$ is larger than $X$ by exactly one point, then there exists a corresponding function $F^+(\tilde{\mathbf{r}}) : \good_{m, n, h} \rightarrow \good_{m, n+1, h}$ with the same properties, except that $\mathbf{e}_{\tilde{\mathbf{r}}}$ satisfies $\mathbf{e}_{\tilde{\mathbf{r}}}[i] = \mathbf{1}[i \geq j]$ for an index $1 \leq j \leq m+1$, and the sum of substitution distances is bounded by $1$ instead of $2$.
\end{lemma}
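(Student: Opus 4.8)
The plan is to construct the map $F^-$ explicitly by "walking" along the sorted order and tracking how the rank-$s$ removal shifts the positions of points above it. Recall that when $X = X' \cup \{x_s\}$, the element originally at rank $r$ in $X$ sits at rank $r$ in $X'$ if $r < s$ and at rank $r-1$ in $X'$ if $r > s$. Hence for a slice $S_i = x_{(\tdr_i - h)}, \ldots, x_{(\tdr_i + h)}$, if the whole window $[\tdr_i - h, \tdr_i + h]$ lies below $s$ the slice is unchanged; if it lies strictly above $s$ then $S_i' = x'_{(\tdr_i - h)}, \ldots$ equals the window $[\tdr_i - h + 1, \tdr_i + h + 1]$ in $X$, so shifting $\tdr_i \mapsto \tdr_i - 1$ makes the slice coincide with $S_i'$ exactly; and only when $s$ falls inside the window $[\tdr_i - h + 1, \tdr_i + h]$ (the "straddling" case) is there a genuine substitution, and at most one index $i$ can straddle when $\tilde{\mathbf{r}} \in \good_{m,n,h}$ since the windows are disjoint. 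So the naive map is: let $j$ be the smallest index with $\tdr_j - h > s$ (i.e. the first fully-above slice), set $j = m+1$ if none, and define $\mathbf{e}_{\tilde{\mathbf{r}}}[i] = -\mathbf{1}[i \ge j]$. This already gives the contiguous form and $\sum_i d_{sub}(S_i, S_i') \le 1$ for the slices strictly above $s$; the straddling slice (index $j-1$, if it straddles) contributes one more substitution, for a total of $\le 2$. Membership in $\good_{m,n,h-1}$: subtracting $1$ from a contiguous tail of a vector in $\good_{m,n,h}$ keeps the gap constraints $\tdr_{i-1} + 2h \le \tdr_i$ intact and only tightens $\tdr_1 \ge h$, $\tdr_m \le n - h$ by at most one, which is why the target set uses $h-1$ (and $\good_{m,n,h} \subseteq \good_{m,n,h-1}$ for the lower-rank coordinates that are untouched).

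The real work is \textbf{injectivity}, and this is the step I expect to be the main obstacle. The issue is that the breakpoint $j$ is determined by where $s$ sits relative to the $\tdr_i$'s, and two different source vectors can land on the same target: if $\tilde{\mathbf{r}}$ has breakpoint $j$ and $\tilde{\mathbf{r}}^\dagger = \tilde{\mathbf{r}} - \mathbf{1}[\cdot \ge j]$ happens itself to have breakpoint $j$ (i.e. the shift didn't move any coordinate across $s$), then $\tilde{\mathbf{r}}$ and $\tilde{\mathbf{r}}^\dagger$ both map to $\tilde{\mathbf{r}}^\dagger - \mathbf{1}[\cdot\ge j]$... no wait — more carefully, the collision is between a vector whose $j$th coordinate equals $s + h + 1$ (so after subtracting it becomes $s+h$, still "above" by the straddling convention) and another vector. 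The fix, following the structure hinted at in the excerpt ("at most two slices differ"), is to detect this boundary case: when $\tdr_{j-1} = s + h$ exactly — meaning slice $j-1$ straddles $s$ in the tightest possible way — we instead decrement starting at index $j-1$ rather than $j$, i.e. we also pull the straddling slice down by one. One then checks that with this tie-breaking rule the map is injective: given the image $\tilde{\mathbf{r}}'$ one can recover $j$ and whether the boundary rule fired by inspecting the coordinate nearest to $s$, hence recover $\tilde{\mathbf{r}}$. I would prove injectivity by exhibiting the inverse on the image explicitly: partition $\good_{m,n,h}$ according to the value of the breakpoint index and the boundary flag, show the map is a bijection onto its image on each piece, and show the images of distinct pieces are disjoint (they are distinguished by the position of the unique "jump" in $\tilde{\mathbf{r}}'$ relative to $s$).

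For $F^+$ (the addition case, $X' = X \cup \{x_s\}$) the argument is strictly easier and is the one sketched in the Technical Overview: the point inserted at rank $s$ pushes everything at rank $\ge s$ up by one, so the first slice whose window contains an index $\ge s$ either straddles or lies fully above; in either case shifting $\tdr_i \mapsto \tdr_i + 1$ for all $i \ge j$ (with $j$ the smallest index such that $\tdr_j + h \ge s$) realigns every fully-above slice and leaves at most the straddling slice differing by one substitution — total $\le 1$. Here no tie-breaking is needed because the map $\tilde{\mathbf{r}} \mapsto \tilde{\mathbf{r}} + \mathbf{1}[\cdot \ge j]$ with $j = j(\tilde{\mathbf{r}}, s)$ is already injective: adding to a tail strictly increases $\|\tilde{\mathbf{r}}'\|_1$ in a way that lets you read off $j$ from the image, and $\tilde{\mathbf{r}}' \in \good_{m, n+1, h}$ since the new dataset has $n+1$ points and the gaps are preserved. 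I would write $F^+$ first as a warm-up, then present $F^-$ with the boundary correction, and close by verifying the three bulleted properties hold in each case.
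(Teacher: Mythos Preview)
Your construction of $F^+$ is fine (it differs from the paper's only in the choice of threshold, and either works). The gap is in $F^-$: the tie-breaking rule you propose does not yield an injection, it merely relocates the collision by one unit.

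Concretely, take any $\tilde{\mathbf{r}} \in \good_{m,n,h}$ with $\tilde r_j = s+h$ for some $j$, and with $\tilde r_{j-1} \le s-h-1$ (there is slack before the straddler). Your naive breakpoint for $\tilde{\mathbf{r}}$ is $j+1$; since $\tilde r_{(j+1)-1}=s+h$, the boundary rule fires and you decrement from $j$, so the image has $j$th coordinate $s+h-1$. Now take $\tilde{\mathbf{r}}''$ identical to $\tilde{\mathbf{r}}$ except $\tilde r''_j = s+h-1$; this lies in $\good_{m,n,h}$ by the slack assumption. Its naive breakpoint is again $j+1$, but now $\tilde r''_{(j+1)-1}=s+h-1 \ne s+h$, so the boundary rule does \emph{not} fire, you decrement from $j+1$, and the image also has $j$th coordinate $s+h-1$. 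All other coordinates agree, so $F^-(\tilde{\mathbf{r}}) = F^-(\tilde{\mathbf{r}}'')$. Extending the tie-break to also trigger at $s+h-1$ just pushes the same collision to $s+h-2$, and so on indefinitely; no finite ``boundary flag'' fixes this.

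The paper's resolution is different in kind: instead of a tie-break, it \emph{lags} the threshold by one coordinate, setting $e_i = -\mathbf{1}[\,i>1 \text{ and } \tilde r_{i-1} - h > s\,]$. The point is that whether coordinate $i$ gets shifted now depends only on $\tilde r_{i-1}$, not on $\tilde r_i$. So if two preimages collided at coordinate $i^*$ (forcing $\tilde r_{i^*-1}$ and $\tilde r'_{i^*-1}$ to straddle the threshold and hence differ), their shifts $e_{i^*-1}, e'_{i^*-1}$ at coordinate $i^*-1$ would both be determined by $\tilde r_{i^*-2}=\tilde r'_{i^*-2}$ and therefore be equal---making the images differ at coordinate $i^*-1$ after all. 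The cost of this lag is that the first fully-above slice may fail to be shifted and thus contribute one extra substitution, which is exactly why the $F^-$ bound is $2$ rather than $1$.
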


The fact that $F^+$ and $F^-$ map to slightly different sets than $\good_{m,n,h}$ is not an important detail; our proof accounts for it by requiring the gap between the input ranks $\mathbf{r}$ to be $1$ higher.

As a result, to analyze privacy under add/remove adjacency, we incur the privacy cost of $\cc$ once and of $\sq$ at most twice, yielding a total privacy parameter of $\epsilon_1 + 2\epsilon_2$. Due to the asymmetry of our mapping when a point is being added or removed, privacy under substitute adjacency is slightly better since substitution is both an addition and a removal operation.
Formally,

\begin{theorem}\label{thm:slice-quant-priv}
    Under add/remove adjacency, $\algsq$ with $\gamma = 0$, $w = \frac{3 \log(m) \log(2m/\delta)}{\epsilon_1}$, and any $h \geq 1$, satisfies $(\varepsilon_1+2\varepsilon_2, \delta)$-differential privacy. Under substitute adjacency, $\algsq$ satisfies $(2\varepsilon_1 + 3\varepsilon_2, \delta + \delta e^{\epsilon_1 + 2\epsilon_2})$-differential privacy.
\end{theorem}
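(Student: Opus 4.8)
The plan is to reduce the statement to two \emph{directional} one-step bounds and then compose. Fix an output event $S\subseteq (a,b)^m$ and regard the rank vector $\mathbf{r}=(r_1,\dots,r_m)$ as a fixed public input to $\algsq$. Writing $h=\lceil \ell/2\rceil$, I would first prove: (i) if $X'=X\cup\{x_s\}$ then $\Pr[\algsq(X)\in S]\le e^{\epsilon_1+\epsilon_2}\Pr[\algsq(X')\in S]+\delta$; and (ii) if $X=X'\cup\{x_s\}$ then $\Pr[\algsq(X)\in S]\le e^{\epsilon_1+2\epsilon_2}\Pr[\algsq(X')\in S]+\delta$. Both notions of adjacency then follow mechanically. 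For add/remove: for an adjacent pair the transition toward the smaller dataset costs at most $\epsilon_1+2\epsilon_2$ by (ii) and toward the larger at most $\epsilon_1+\epsilon_2$ by (i), so in either direction the loss is at most $\epsilon_1+2\epsilon_2$ with failure $\delta$. For substitute: factor the operation as $X\to X''\to X'$ with $X''=X\setminus\{x\}$ a removal and $X'=X''\cup\{x'\}$ an addition; chaining (ii) then (i) and using the group-privacy composition $(\epsilon^{(1)}+\epsilon^{(2)},\,\delta^{(1)}+e^{\epsilon^{(1)}}\delta^{(2)})$ with $(\epsilon^{(1)},\delta^{(1)})=(\epsilon_1+2\epsilon_2,\delta)$ and $(\epsilon^{(2)},\delta^{(2)})=(\epsilon_1+\epsilon_2,\delta)$ yields exactly $(2\epsilon_1+3\epsilon_2,\,\delta+\delta e^{\epsilon_1+2\epsilon_2})$.

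For the directional bounds I would condition on $\tilde{\mathbf{r}}=\cc_{\epsilon_1}(\mathbf{r})$, whose law is data-independent, and split on whether $\tilde{\mathbf{r}}$ is ``sufficiently good''. By Lemma~\ref{lem:co}(2) with $\beta=\delta$, together with the input requirement $\mathbf{r}\in\good_{m,n,w+h+1}$ from Line~\ref{alg:line-require} and the choice $w=3\log(m)\log(2m/\delta)/\epsilon_1$, we have $\|\mathbf{r}-\tilde{\mathbf{r}}\|_\infty< w$ except with probability at most $\delta$, and on that event $\tilde{\mathbf{r}}$ even lies in $\good_{m,n,h+1}$ (the extra ``$+1$'' in the input gap is precisely the slack that lets the map of Lemma~\ref{lem:map} land inside the good set used by the run on $X'$). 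The complement, of probability at most $\delta$, is absorbed into the additive term. On the good event $\algsq(X)$ does not fail, and, taking the removal case, Lemma~\ref{lem:map} provides an \emph{injection} $F^-$ with $F^-(\tilde{\mathbf{r}})=\tilde{\mathbf{r}}+\mathbf{e}_{\tilde{\mathbf{r}}}$ for a contiguous $\mathbf{e}_{\tilde{\mathbf{r}}}$, which restricted to $\good_{m,n,h+1}$ maps into the good set $\good_{m,n-1,h}$ of the run on $X'$ and makes the slice families $S_1,\dots,S_m$ (from $X,\tilde{\mathbf{r}}$) and $S_1',\dots,S_m'$ (from $X',F^-(\tilde{\mathbf{r}})$) pairwise disjoint within each dataset with total substitution distance $\sum_i d_{sub}(S_i,S_i')\le 2$. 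Since $\mathbf{e}_{\tilde{\mathbf{r}}}$ has the form of Eq.~\eqref{eq:naive-map}, Lemma~\ref{lem:co}(1) gives $\Pr[\cc(\mathbf{r})=\tilde{\mathbf{r}}]\le e^{\epsilon_1}\Pr[\cc(\mathbf{r})=F^-(\tilde{\mathbf{r}})]$.

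The remaining ingredient compares the conditional laws of the \texttt{for}-loop, $p_X(\cdot\mid\tilde{\mathbf{r}})$ versus $p_{X'}(\cdot\mid F^-(\tilde{\mathbf{r}}))$, for which I would argue pointwise in the released tuple $z$. Along the fixed release order $\pi$, the arguments fed to the $i$-th call of $\sq$ — the prefix $z_{\pi(1):\pi(i-1)}$, the order $\pi$, and $\mathsf{ClipRange}(z_{\pi(1)},\dots,z_{\pi(i-1)},a,b)$ — are the same functions of $z$ in both runs, the slices have the same length, and the target rank $\lfloor |S_j|/2\rfloor$ passed to $\sq$ is the same; hence the density factors for the $\ge m-2$ indices with $S_j=S_j'$ cancel, and the $\le 2$ indices with $S_j\ne S_j'$ contribute a ratio of at most $e^{\epsilon_2}$ each by the $\epsilon_2$-differential privacy of $\sq$ under substitution in its dataset argument (if both substitutions fall in one slice, $\epsilon_2$-DP of $\sq$ and group privacy still give $e^{2\epsilon_2}$). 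Thus $p_X(z\mid\tilde{\mathbf{r}})\le e^{2\epsilon_2}p_{X'}(z\mid F^-(\tilde{\mathbf{r}}))$ pointwise (and $e^{\epsilon_2}$ in the addition case, where Lemma~\ref{lem:map} gives total substitution distance $1$). Summing over $\tilde{\mathbf{r}}\in\good_{m,n,h+1}$, using injectivity of $F^-$ to reindex by $\tilde{\mathbf{r}}'=F^-(\tilde{\mathbf{r}})$ without collisions, and enlarging the sum to all of $\good_{m,n-1,h}$, produces $\Pr[\algsq(X)\in S]\le\delta+e^{\epsilon_1+2\epsilon_2}\Pr[\algsq(X')\in S]$; the addition case is identical with $F^+$ and one fewer $\epsilon_2$.

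I expect the last step to be the main obstacle: turning ``at most two slices change'' into ``at most $2\epsilon_2$ of privacy loss'' in the presence of the loop's adaptivity (through $\mathsf{ClipRange}$) cannot be handled by generic composition — it needs the pointwise cancellation above, which is available only because the two runs are coupled through a \emph{single} noisy vector linked by the injective contiguous shift of Lemma~\ref{lem:map}. A map with multiplicity $2$ (as in~\cite{10.1109/FOCS.2015.45}) would instead force an extra factor of $2$ and break the bound for small $\epsilon$. Secondary care is needed for the mismatch between the sets $\good_{m,n,\cdot}$ appearing in Lemma~\ref{lem:map} and those actually checked by the run on $X'$, which is exactly why one restricts attention to $\good_{m,n,h+1}$ and why the input requirement carries the extra $+1$.
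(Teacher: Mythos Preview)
Your proposal is correct and follows essentially the same route as the paper's proof: condition on $\tilde{\mathbf{r}}$, absorb the event $\tilde{\mathbf{r}}\notin\good$ into $\delta$ via Lemma~\ref{lem:co}(2), apply the injective maps $F^+,F^-$ of Lemma~\ref{lem:map} together with Lemma~\ref{lem:co}(1), and then derive the substitute bound by group privacy through the common subset $X''$. Your treatment is in fact more explicit than the paper's in two places --- the pointwise density comparison of the \texttt{for}-loop that accounts for the adaptivity of $\mathsf{ClipRange}$, and the bookkeeping that $\good_{m,n,h+1}$ lands in the good set checked by the run on $X'$ --- and your substitute computation $(2\epsilon_1+3\epsilon_2,\,\delta+\delta e^{\epsilon_1+2\epsilon_2})$ matches the theorem statement (the appendix proof has $\epsilon_1$ and $\epsilon_2$ swapped in its final line).
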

The proof is provided in Appendix~\ref{app:slice-quant-priv}. 
Notice that privacy holds for any $h \geq 1$; but our utility results, which we will show later, require $h$ to be sufficiently large.

\noindent\textbf{Conversion to Pure Differential Privacy.} Observe that $\delta$ plays a limited role in Theorem~\ref{thm:slice-quant-priv}, namely it affects only the minimum gap separating the ranks of interest. When the rank gaps (and the data size) are sufficiently large, $\delta$ can be made small enough to be absorbed into the $\epsilon$ terms. To do this, we prove a reduction from $(\epsilon, \delta)$-approximate differential privacy to $\epsilon$-pure differential privacy, which holds when $\delta < \frac{1}{|\calY|}$, the inverse of the size of the output range. 

\begin{lemma}\label{lem:pure-dp-conversion}
If a mechanism $\calA(X)$ with discrete output range $\calY$ satisfies $(\epsilon, \delta)$-differential privacy with $\frac{\delta  |\calY|}{(e^{\epsilon}-1)} \leq 1$, then the mechanism $\tilde{\calA}(X)$, which outputs a random sample from $\calY$ with probability $\gamma = \frac{\delta  |\calY|}{(e^{\epsilon}-1)}$ and outputs $\calA(X)$ otherwise, satisfies $(\epsilon, 0)$-DP.
\end{lemma}

To apply the lemma to $\algsq$, we need to discretize the output domain to $[b]^m$ (for example, by rounding to the nearest integers). This introduces a maximum error of 1 in the quantile estimates, which is negligible when the dataset has a minimum gap of 1. A pure differential privacy guarantee is then a corollary of Lemma~\ref{lem:pure-dp-conversion}
and Theorem~\ref{thm:slice-quant-priv} with $\delta = \frac{\gamma (e^{\epsilon_1}-1)}{b^m}$.

\begin{corollary}\label{coro:slice-quant-approx}
Under add/remove adjacency, $\algsq$ with $\gamma > 0$ and
\[
w = \frac{3\log(m)}{\epsilon_1}\left(m \log b + \log \left(\frac{2m(e^{\epsilon_2}-1)}{\gamma}\right)\right)
\] and with estimates rounded to $\lfloor z_1 \rceil, \ldots, \lfloor z_m \rceil$
satisfies $\varepsilon_1+2\varepsilon_2$-pure differential privacy. Under substitute adjacency, Algorithm~\ref{alg:slice-quant} with $w = \frac{3\log(m)}{\epsilon_1}\left(m \log b + \log \left(\frac{2m(e^{\epsilon_1}-1)}{\gamma}\right) + \epsilon_1 + 2\epsilon_2\right)$  satisfies $2 \varepsilon_1 + 3\varepsilon_2$-pure differential privacy.
\end{corollary}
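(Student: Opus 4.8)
The plan is to obtain Corollary~\ref{coro:slice-quant-approx} by combining Theorem~\ref{thm:slice-quant-priv} with the pure-DP conversion in Lemma~\ref{lem:pure-dp-conversion}, after a harmless discretization. First I would discretize the output: round every estimate to the nearest integer and, in the random-output branch, sample uniformly from $[b]^m$ rather than from $(a,b)^m$. Rounding is post-processing, so it preserves the $(\epsilon,\delta)$-guarantee; it inflates the rank error by at most $1$, which is negligible since $X$ has minimum gap $1$; and it fixes the output range to $\calY=[b]^m$ with $|\calY|=b^m$, as required by Lemma~\ref{lem:pure-dp-conversion}. Next I would record the key structural identity: running $\algsq$ with mixing probability $\gamma$ is exactly the mechanism $(1-\gamma)\calA+\gamma\,U_\calY$ of Lemma~\ref{lem:pure-dp-conversion}, where $\calA$ denotes $\algsq$ with $\gamma=0$. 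Indeed, with probability $\gamma$ the coin $c$ is heads and the output is a fresh uniform sample (the vector $\tilde{\mathbf{r}}=\cc_{\epsilon_1}(r_1,\ldots,r_m)$ is simply discarded), and with probability $1-\gamma$ the coin is tails and the remaining execution is verbatim $\calA$; note that the internal $\good_{m,n,h}$-check that may also emit a uniform output lives inside $\calA$ and is already accounted for by $\calA$'s additive $\delta$.

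With this in hand, the add/remove case follows in two moves. By Theorem~\ref{thm:slice-quant-priv}, $\calA$ is $(\epsilon_1+2\epsilon_2,\delta)$-DP precisely when $w=\tfrac{3\log(m)\log(2m/\delta)}{\epsilon_1}$, i.e.\ $w$ and $\delta$ are in bijection via $\delta(w)=2m\,e^{-\epsilon_1 w/(3\log m)}$. Applying Lemma~\ref{lem:pure-dp-conversion} to $\calA$ with $\epsilon=\epsilon_1+2\epsilon_2$ and $|\calY|=b^m$ then turns it into an $(\epsilon_1+2\epsilon_2,0)$-DP mechanism by mixing in $U_\calY$ with weight $\tfrac{\delta b^m}{e^{\epsilon_1+2\epsilon_2}-1}$, provided this weight is at most $1$ and at most the $\gamma$ actually used by $\algsq$ (using strictly more uniform weight than the prescribed minimum is harmless: a larger mixture $(1-\gamma)\calA+\gamma U_\calY$ can be rewritten as the prescribed mixture applied to the coarser mechanism $\tfrac{1-\gamma}{1-\gamma_0}\calA+\tfrac{\gamma-\gamma_0}{1-\gamma_0}U_\calY$, which is still $(\epsilon,\delta)$-DP as a convex combination of $(\epsilon,\delta)$-DP mechanisms). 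Choosing $w$ as in the statement makes $\delta(w)$ small enough to satisfy both conditions, so $\algsq$ with that $w$ and $\gamma$, with rounded estimates, is $(\epsilon_1+2\epsilon_2,0)$-DP. The substitute case is the identical computation but started from the weaker guarantee of Theorem~\ref{thm:slice-quant-priv}, namely $(2\epsilon_1+3\epsilon_2,\,\delta(1+e^{\epsilon_1+2\epsilon_2}))$-DP; the extra multiplicative factor $1+e^{\epsilon_1+2\epsilon_2}\le 2e^{\epsilon_1+2\epsilon_2}$ on the effective $\delta$ becomes, after taking logarithms in $\delta(w)$, the extra additive $\epsilon_1+2\epsilon_2$ (plus a $\log 2$ absorbed into constants) appearing inside $w$ for substitute adjacency.

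The step where more than mechanical substitution is required — and the one I would flag as the main obstacle — is reproducing the exact constants in the stated $w$ (in particular the $e^{\epsilon_2}-1$ factor for add/remove versus the $e^{\epsilon_1}-1$ factor for substitute). A crude accounting, treating the $\delta$ of Theorem~\ref{thm:slice-quant-priv} as a worst-case additive term and plugging $\delta=\gamma(e^{\epsilon_1+2\epsilon_2}-1)/b^m$ into $\delta(w)$, already yields a $w$ of the correct shape $\Theta\!\big(\tfrac{\log m}{\epsilon_1}(m\log b+\log(m/\gamma))\big)$ and hence a legitimate $\epsilon$-pure-DP mechanism; the sharper constant in the statement instead requires reopening the proof of Theorem~\ref{thm:slice-quant-priv} to see that the additive error it produces at each output $z$ is really of the form $\delta\cdot\Pr[U_\calY=z]$ rather than a flat $\delta$, and to identify exactly which of $e^{\epsilon_1},e^{\epsilon_2}$ the uniform probability mass must be compared against when absorbing that term. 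Everything else — the discretization bookkeeping and the verification that the prescribed $w$ automatically certifies the hypothesis $\delta|\calY|/(e^\epsilon-1)\le 1$ of Lemma~\ref{lem:pure-dp-conversion} — is routine.
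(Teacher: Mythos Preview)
Your approach is exactly the paper's: Corollary~\ref{coro:slice-quant-approx} is presented there as an immediate consequence of Theorem~\ref{thm:slice-quant-priv} and Lemma~\ref{lem:pure-dp-conversion} after rounding the output to $[b]^m$, with only the single sentence ``A pure differential privacy guarantee is then a corollary of Lemma~\ref{lem:pure-dp-conversion} and Theorem~\ref{thm:slice-quant-priv} with $\delta = \tfrac{\gamma(e^{\epsilon_1}-1)}{b^m}$'' offered as proof. Your flagged concern about the precise $e^{\epsilon_i}-1$ factor inside $w$ is legitimate --- the paper's own one-line derivation does not match its stated formula exactly either, and a straight application of Lemma~\ref{lem:pure-dp-conversion} with $\epsilon=\epsilon_1+2\epsilon_2$ puts $e^{\epsilon_1+2\epsilon_2}-1$ in the \emph{denominator} rather than $e^{\epsilon_2}-1$ in the numerator --- but the asymptotic shape $w=\Theta\!\big(\tfrac{\log m}{\epsilon_1}(m\log b+\log(m/\gamma))\big)$, which is all the downstream utility corollaries consume, is correct in every version.
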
 

For most parameter settings, $w$ will dominated by the $m \log b\frac{\log(m)}{\epsilon_1}$ term. To satisfy the condition $\mathbf{r} \in \mathsf{Good}_{m,n,w+h}$, it is necessary to have $n \geq m^2\log b\frac{\log(m)}{\epsilon_2}$. For instance, when $\epsilon_1=\epsilon_2=1$, $b = 2^{32}$ and $m=100$, then $w \approx 65,000$, the minimum gap is $2w = 130,000$, and the total amount of data required is $1.3 \times 10^7$.
While the minimum gap between quantiles maybe too large for some datasets, our algorithm offers asymptotic utility improvements over the best-known pure differential privacy algorithms when this assumption is met. We expand on this in the next section.

\section{Utility Analysis}\label{sec:utility}

$\algsq$ may be implemented with any private algorithm $\sq$ for estimating a single quantile of a dataset. We introduce a general notion of accuracy for $\sq$ in order to derive a general error bound.

\begin{definition}
    An algorithm $\sq(X)$ is an $(\alpha, \ell, \beta)$ algorithm for median estimation if, for all datasets $X \in [0,b]^n$ of size $n \geq \ell$, with probability at least $1-\beta$, $\sq(X)$ returns a median estimate $z$ with rank error $|\frac{n}{2} - \mathsf{rank}_X(z)| \leq \alpha$.
\end{definition}

For our purposes, it is sufficient to only require $\sq$ to return a median estimate, since the median of the slice $S_i$ is the element with the desired rank $\tilde{r}_i$ in $X$. In its general form, our utility guarantee is as follows:

\begin{theorem}\label{thm:gen-util}
    Suppose $\algsq$ is run with an $(\alpha, \ell, \frac{\beta}{m})$ algorithm $\sq$ for single quantile estimation. Then, for any input ranks $r_1, \dots, r_m$ such that they are in $\good_{n,m,w+h}$ for $h = \lceil (\ell-1)/2\rceil$, conditioned on Line~\ref{alg:line-check} not failing,
    the returned estimates $Z$ satisfy $\textnormal{Err}_X(Q, Z) \leq O\left(\alpha + \frac{\log m \log(\frac{m}{\beta})}{\epsilon_2}\right)$ with probability $1-\beta$.
\end{theorem}
We prove this theorem in Appendix~\ref{app:utility}. 
Next, we specialize this utility theorem in both the pure and approximate DP settings, and compare them to the best-known algorithms.

\subsection{Utility Guarantee under Pure Differential Privacy}
Under pure DP, we may implement $\sq$ as the exponential mechanism with privacy parameter $\epsilon_2$ and utility given by the negative rank error. We show in Appendix~\ref{app:utility} that this is a $(h, 2h+1, \beta)$ algorithm for median estimation with $h = \left \lceil \frac{2\log(2b/\beta)}{\epsilon_2}\right\rceil$, for any $\beta \in (0, 1)$. This gives the following immediate corollary (for simplicity, we state it using add/remove adjacency).

\begin{corollary}\label{cor:utility_pure}    
    Suppose $\algsq$ is run with $w$ set as in Corollary~\ref{coro:slice-quant-approx}, $\sq$ set to be the exponential mechanism and $\ell = 2\left\lceil\tfrac{2\log(2bm/\beta)}{\epsilon_2}\right\rceil+1$. Then, the algorithm satisfies $\epsilon$-differential privacy with $\epsilon = \epsilon_1 + 2\epsilon_2$, and with probability at least $1-\beta-\gamma$, achieves an error bound of $\textnormal{Err}_X(Q,Z) \leq O\left(\tfrac{\log(b)}{\epsilon} +  \frac{\log m \log (m/\beta)}{\epsilon}\right)$ for any input quantiles with gap $\frac{6 m \log (b) \log (m)}{\epsilon_1 n} + O(\frac{\log (m) \log (m (e^\epsilon-1)/\gamma)}{\epsilon n})$.
\end{corollary}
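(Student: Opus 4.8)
The plan is to derive Corollary~\ref{cor:utility_pure} by specializing the general utility bound of Theorem~\ref{thm:gen-util} to the instantiation of $\sq$ by the exponential mechanism, and then bookkeeping the parameter choices so that the failure probabilities and gap condition line up. First I would establish the claimed accuracy of the exponential mechanism: run with privacy parameter $\epsilon_2$ and utility score $u(X,z)=-|\,\mathsf{rank}_X(z)-n/2\,|$ over the output domain $(0,b)$ (or its integer discretization $[b]$), the standard exponential-mechanism tail bound gives that with probability $\ge 1-\beta$ the returned $z$ has rank error $O\!\left(\tfrac{\log(b/\beta)}{\epsilon_2}\right)$. Choosing $\ell=2\tfrac{\log(bm/\beta)}{\epsilon_2}$ makes this an $(\alpha,\ell,\beta/m)$-algorithm with $\alpha=\ell/2=\tfrac{\log(bm/\beta)}{\epsilon_2}$, after noting a slice of size $2h+1\ge \ell$ has enough data; I would point to the appendix for the routine constants.

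Next I would verify the hypotheses of Theorem~\ref{thm:gen-util}. We need Line~\ref{alg:line-check} to fail with probability at most $\delta+\gamma$. Here $\gamma$ is the explicit mixing probability; the contribution from $\tilde{\mathbf r}\notin\good_{m,n,h}$ is controlled by the continual-counting tail bound in Lemma~\ref{lem:co}: as long as $\mathbf r\in\good_{m,n,w+h}$ and $w$ is at least the high-probability $\ell_\infty$-error of $\cc$, a miss happens with probability at most $\beta/(b^m)$ (or any target we fix); so $\delta$ can be taken tiny, consistent with the $w$ in Corollary~\ref{coro:slice-quant-approx}. With $\beta\ge\max\{1/b^m,\gamma\}$, the ``$1-2\beta-\delta-\gamma$'' success probability of Theorem~\ref{thm:gen-util} becomes $1-O(\beta)$, which I would absorb into $1-4\beta$. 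Plugging $\alpha=\tfrac{\log(bm/\beta)}{\epsilon_2}$ into the conclusion $\textnormal{Err}_X(Q,Z)\le O\!\left(\alpha+\tfrac{\log m\log(m/\beta)}{\epsilon_2}\right)$ and using $\epsilon=\epsilon_1+2\epsilon_2=\Theta(\epsilon_2)$ (and $\epsilon_1=\Theta(\epsilon)$) gives $O\!\left(\tfrac{\log b}{\epsilon}+\tfrac{\log m\log(m/\beta)}{\epsilon}\right)$, splitting $\log(bm/\beta)$ as $\log b + \log(m/\beta)$ and folding the $\log(m/\beta)$ piece into the second term.

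Finally I would translate the abstract gap requirement $\Delta=\tfrac{w+\ell/2}{n}$ from Theorem~\ref{thm:gen-util} into the explicit quantile-spacing bound. Substituting $\ell=2\tfrac{\log(bm/\beta)}{\epsilon_2}$ and the value $w=\tfrac{3\log m}{\epsilon_1}\big(m\log b+\log(2m(e^{\epsilon_2}-1)/\gamma)\big)$ from Corollary~\ref{coro:slice-quant-approx}, the $w$ term dominates and yields $\Delta=\tfrac{3m\log(b)\log(m)}{\epsilon_1 n}+O\!\left(\tfrac{\log(m)\log(m(e^{\epsilon}-1)/\gamma)}{\epsilon n}\right)+O\!\left(\tfrac{\log(bm/\beta)}{\epsilon_2 n}\right)$; since $\epsilon_1,\epsilon_2=\Theta(\epsilon)$ and the last (from $\ell$) term is lower order, this matches the stated $\Delta$. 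Privacy is immediate from Corollary~\ref{coro:slice-quant-approx} with this $w$ and $\gamma>0$, giving $\epsilon_1+2\epsilon_2$-pure DP.

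The main obstacle is not any single deep step but the careful reconciliation of all the parameters: making sure the choice $\beta\ge\max\{1/b^m,\gamma\}$ is exactly what lets the $\delta=\tfrac{\gamma(e^{\epsilon_1}-1)}{b^m}$ from the pure-DP conversion be swallowed into the error term, that the ``fails with probability $\le\delta+\gamma$'' hypothesis of Theorem~\ref{thm:gen-util} is met by the $w$ of Corollary~\ref{coro:slice-quant-approx} rather than that of Theorem~\ref{thm:slice-quant-priv}, and that the discretization-to-$[b]^m$ rounding (needed for Lemma~\ref{lem:pure-dp-conversion}) only costs an additive $1$ in rank error, which is dominated. I would handle this by stating the parameter substitutions cleanly and deferring the constant-chasing to Appendix~\ref{app:utility}.
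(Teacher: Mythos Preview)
Your proposal is correct and follows essentially the same route as the paper: the paper presents Corollary~\ref{cor:utility_pure} as an ``immediate corollary'' obtained by plugging the exponential-mechanism guarantee (Theorem~\ref{th: EM with slice}, giving a $(\ell/2,\ell,\beta)$ median estimator with $\ell=2\log(b/\beta)/\epsilon_2$) into Theorem~\ref{thm:gen-util}, with privacy inherited from Corollary~\ref{coro:slice-quant-approx}. Your write-up simply makes the parameter bookkeeping explicit---the union bound over $m$ slices, the absorption of $\delta$ and $\gamma$ into $\beta$ via $\beta\ge\max\{1/b^m,\gamma\}$, and the expansion of $\Delta=(w+\ell/2)/n$---all of which the paper leaves implicit.
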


In contrast, the state-of-the-art algorithm under pure differential privacy attains error $O\big(\frac{\log b \log^2(m)}{\epsilon} + \frac{\log(m/\beta) \log^2(m)}{\epsilon}\big)$~\cite{DBLP:conf/icml/KaplanSS22}. When $\log(b) > \log(m)^2$, error is improved by a factor of $\log(m)^2$. When $\log(m) < \log(b) < \log(m)^2$, the factor is $\log(b)$. This represents an improvement factor of $\min\{\log(b), \log(m)^2\}$ when $b > m$. Note that this improvement comes with a mild constraint: it requires the minimum gap between quantiles to be at least $\Omega(\frac{m \log (b) \log(m)}{\epsilon n})$.

Nevertheless, the case of equally spaced quantiles remains a well-studied and important problem. In this setting,~\cite{DBLP:conf/icml/KaplanSS22} provide an improved analysis of their algorithm, achieving an error of $O\big(\frac{\log b \log(m)}{\epsilon} + \frac{\log(m/\beta) \log(m)}{\epsilon}\big)$. Our algorithm still offer an improvement by a factor of $\log(m)$ for the case $b > m$. Note that to meet the required quantile gap, it must hold that $n \geq \Omega(m^2 \log b \frac{\log m}{\epsilon})$.

\subsection{Finding Quantiles Under Approximate Differential Privacy}
Under approximate differential privacy, quantile estimation algorithms with more favourable dependence on $b$ are known. Specifically, as shown in~\cite{10.1145/3564246.3585148}, there exists an $(\epsilon, \delta)$-DP algorithm which is a $(\frac{\ell}{2}, \ell, \delta \log^*(b))$ algorithm for median estimation with $\ell =  \frac{1000 \log^*(b)}{\epsilon} \log(\frac{1}{\delta})^2$. This algorithm may be used to answer general \emph{threshold queries}, or the fraction of data below a query point $x \in [0,b]$, of a dataset of size $n$, and provides error that scales with $\log^*(b)$ instead of $\log(b)$. We provide the full details of these results in Appendix~\ref{app:utility}.
When translated to quantile estimation, this threshold query-based method can answer any set of quantiles with error $O(\log^*(b) \frac{\log^2(\log^*(b) / \beta \delta)}{\epsilon})$.
We note that there is no dependence on $m$ in this bound. 

Though the factor of 1000 above is probably far from tight, even the $\log^2(1/\delta)$ factor incurred by the algorithm can be quite large, and result in higher error despite the improved $\log^*(b)$ dependence on the domain size. To avoid incurring $\log(1/\delta)$ terms, we instantiate the $(\epsilon, \delta)$ version of $\algsq$ with the exponential mechanism to obtain: 

\begin{corollary}\label{coro:uti-approx}
    
    Suppose $\algsq$ is run with $w$ set as in Theorem~\ref{thm:slice-quant-priv}, $\sq$ set to be the exponential mechanism, with $\gamma = 0$ and $\ell = 2\left\lceil\frac{2\log(2bm/\beta)}{\epsilon_2}\right\rceil + 1$. Then, the algorithm satisfies $(\epsilon, \delta)$-differential privacy with $\epsilon = \epsilon_1 + 2\epsilon_2$, and with probability at least $1-\beta-\delta$, achieves an error bound of $\textnormal{Err}_X(Q,Z) \leq O(\frac{\log(b)}{\epsilon} +  \frac{\log(m) \log(\frac{m}{\beta})}{\epsilon})$ for any input quantiles with gap $\frac{6 \log(m) \log(2m/\delta)}{\epsilon_1 n} + \frac{4\log(2bm/\beta)}{\epsilon_2 n}$.
\end{corollary}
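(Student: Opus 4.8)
The statement is obtained by instantiating the general privacy and utility theorems with the exponential mechanism playing the role of $\sq$, and then tracking the resulting parameters. For the privacy claim, observe that the choices $w=\tfrac{3\log(m)\log(2m/\delta)}{\epsilon_1}$ and $\gamma=0$ are exactly the hypotheses of Theorem~\ref{thm:slice-quant-priv} under add/remove adjacency, and that theorem holds for every $\ell\ge 1$, so in particular for $\ell=2\log(bm/\beta)/\epsilon_2$; hence $\algsq$ is $(\epsilon_1+2\epsilon_2,\delta)$-DP. Splitting the budget as $\epsilon_1=\epsilon_2=\epsilon/3$ (or any $\epsilon_1,\epsilon_2=\Theta(\epsilon)$) gives privacy parameter $\epsilon_1+2\epsilon_2=\epsilon$.

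For the utility side I would first record the fact, proved in Appendix~\ref{app:utility} and already invoked in Corollary~\ref{cor:utility_pure}, that the exponential mechanism with privacy parameter $\epsilon_2$ and score equal to the negative rank error is a $(\tfrac{\ell}{2},\ell,\tfrac{\beta}{m})$ algorithm for median estimation with $\ell = 2\tfrac{\log(bm/\beta)}{\epsilon_2}$, so that $\alpha=\ell/2=\tfrac{\log(bm/\beta)}{\epsilon_2}$. Next I would verify the failure hypothesis of Theorem~\ref{thm:gen-util}: since $\gamma=0$, Line~\ref{alg:line-check} fails only when $\tilde{\mathbf r}\notin\good_{m,n,h}$, and because the \textbf{Require} line forces $\mathbf r\in\good_{m,n,w+h+1}$ (with $h=\lceil\ell/2\rceil$), a short computation shows $\tilde{\mathbf r}\in\good_{m,n,h}$ whenever $\|\mathbf r-\cc(\mathbf r)\|_\infty\le w$. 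Applying part (2) of Lemma~\ref{lem:co} with error probability $\delta/4$ shows this holds except with probability $\delta/4\le\delta$, so Line~\ref{alg:line-check} fails with probability at most $\delta=\delta+\gamma$, as required.

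With these two ingredients in place, Theorem~\ref{thm:gen-util} immediately yields, with probability at least $1-2\beta-\delta-\gamma$, the bound $\textnormal{Err}_X(Q,Z)\le O\!\big(\alpha+\tfrac{\log m\,\log(m/\beta)}{\epsilon_2}\big)=O\!\big(\tfrac{\log(bm/\beta)}{\epsilon_2}+\tfrac{\log m\,\log(m/\beta)}{\epsilon_2}\big)$, which simplifies to $O\!\big(\tfrac{\log b}{\epsilon}+\tfrac{\log m\,\log(m/\beta)}{\epsilon}\big)$ using $\epsilon_2=\Theta(\epsilon)$ and $\log(bm/\beta)\le\log b+\log m\,\log(m/\beta)$; and since $\beta\ge\delta$ the success probability $1-2\beta-\delta-\gamma$ is at least $1-3\beta$. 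Finally, the admissible quantiles are exactly those meeting the hypothesis of Theorem~\ref{thm:gen-util}, $\Delta=\tfrac{w+\ell/2}{n}$, and substituting the chosen $w$ and $\ell$ gives $\Delta=\Theta\!\big(\tfrac{\log(m)\log(m/\delta)+\log(bm/\beta)}{\epsilon n}\big)$, matching the stated gap up to constants.

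The argument is essentially mechanical once the pieces are assembled; the only step carrying genuine content is the utility analysis of the exponential mechanism as a median estimator (deferred to Appendix~\ref{app:utility}), and the only step needing care is confirming that the prescribed $w$ is large enough that Line~\ref{alg:line-check} fails with probability at most $\delta$ — i.e.\ lining up the tail bound of Lemma~\ref{lem:co}(2) against the \textbf{Require} condition on $\mathbf r$ — so that the failure hypothesis of Theorem~\ref{thm:gen-util} is met with $\gamma=0$.
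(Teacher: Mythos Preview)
Your proof is correct and follows the same approach as the paper, which leaves this corollary without an explicit proof (it is presented as an immediate instantiation of Theorem~\ref{thm:slice-quant-priv} and Theorem~\ref{thm:gen-util} with the exponential mechanism as $\sq$). Your write-up fills in exactly these steps---privacy from Theorem~\ref{thm:slice-quant-priv}, the $(\ell/2,\ell,\beta/m)$ guarantee for the exponential mechanism from Appendix~\ref{app:utility}, the Line~\ref{alg:line-check} failure bound via Lemma~\ref{lem:co}(2), and the error and gap simplifications---with appropriate care.
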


Observe that whenever $\log m \log(\frac{1}{\delta}) \leq \log b$, the gap is actually asymptotically \emph{less} than the normalized error bound. This means the gap requirement can be removed entirely by merging any quantiles too close, and this will only increase the error term by a constant factor.

Compared with the previous error guarantee, the guarantee in Corollary~\ref{coro:uti-approx} is in fact lower whenever $\log(b) < \log^2(\frac{1}{\delta}) \log^*(b)$ and $\log(m)$ is sufficiently small compared to $\log(1/\delta)$, which is often true for practical choices of the parameters. For instance, $\log^*(b)$ rarely exceeds 4 even for very large domains, while $\log(b)$ is typically below 64, and $\log^2(1/\delta)$ often reaches into the thousands for typical choices of $\delta$. Furthermore, the hidden constant factor in the former algorithm is not known to be under $1000$, while in our case it is roughly 10. These factors underscore the superior practical performance of our approach. 

\section{Experiments}\label{sec:experiments}
For our experiments (code is open-sourced\footnote{\href{https://github.com/NynsenFaber/DP_CC_quantiles}{https://github.com/NynsenFaber/DP\_CC\_quantiles}}), we use a variant of the $k$-ary tree $\cc$ mechanism introduced in \cite{andersson2024count} with two-sided geometric noise, and $\mathsf{SingleQuantile}$ implemented as the exponential mechanism \cite{DBLP:conf/icml/KaplanSS22, DBLP:conf/stoc/Smith11}.  Similarly to Kaplan et al.~\cite{DBLP:conf/icml/KaplanSS22}, we construct two real-valued datasets by adding small Gaussian noise to the \texttt{AdultAge} and \texttt{AdultHours} datasets \cite{adult_2}; both datasets, corresponding to ages and hours worked per week, exist on the interval $[0, 100]$; we use this as our data domain. Unlike their approach, however, we scale the dataset by duplicating each data point 12 times (preserving multiset ranks), resulting in approximately $n = 500{,}000$ entries. This allows us to analyze a large number of quantiles without needing to resort to merging techniques. Concretely, the gap assumption for our parameters is 0.005, so up to 200 equally-spaced quantiles could be answered. The distribution of these datasets are detailed in Appendix~\ref{appendix: experiments}.
To ensure a minimum spacing of $1/n$ between data points, we add $i/n$ to the $i$th element in the sorted dataset.
For each $m \in \{10, 20, \dots, 200\}$, we randomly sample $m$ quantiles from the set of $250$ uniformly spaced quantiles $\{ i/251 : i = 1, \dots, 250 \}$, and run experiments on both datasets. 
This sampling procedure is performed independently for each experiment, ensuring that the reported results represent an average over both good and bad instantiations of the problem.
We evaluate the mechanism under both substitute and add/remove adjacency, using $\varepsilon = 1$ and $\delta = 10^{-16} \ll \frac{1}{n^2}$ as the privacy parameters. Additional experiments with varying privacy budgets are presented in Appendix~\ref{app:experiments}. The $y$-axis in our results reports the average maximum rank error, with 95\% confidence intervals computed via bootstrapping over $200$ experiments. Results are shown in Figure~\ref{fig: utility}: the first two plots (from the left) correspond to substitute adjacency, while the remaining plots correspond to  add/remove adjacency. 

\paragraph{Baseline Algorithms.} Our primary baseline is Approximate Quantiles~\cite{DBLP:conf/icml/KaplanSS22}, which we abbreviate to $\mathsf{AQ}$. Experiments in that reference demonstrate improved utility over the AppindExp algorithm~\cite{DBLP:conf/icml/GillenwaterJK21}, hence we do not include AppindExp in our comparisons. 
$\mathsf{AQ}$ satisfies both pure and approximate DP, with the approximate DP analysis leveraging an improved analysis of the exponential mechanism under zero-concentrated DP~\cite{DBLP:conf/tcc/BunS16, DBLP:conf/alt/Cesar021}. As we are using approximate DP for our experiments, we include privacy analyses of $\mathsf{AQ}$ as comparison baselines. We also compare to the histogram-based method of~\cite{lalanne2023private}---because its error is much higher due to making linear approximations of the data distribution, we put these plots in Appendix~\ref{app:experiments}.

\paragraph{Implementation details of $\mathsf{SliceQuantiles}$.} Our empirical results indicate that the most effective strategy for allocating the privacy budget between $\cc$ and $\sq$ is to divide it equally, assigning half to each mechanism. To compute the size of the slice, we use $h = \left \lceil \frac{2}{\varepsilon}\log\left(\frac{2m\psi}{\beta}\right) \right\rceil$, according to Theorem \ref{th: EM with slice}, with $\beta = 0.05$ and $\psi = \frac{b-a}{g} = 100n$ as $[a,b] = [0,100]$ and $g = \frac{1}{n}$. 
We used numerical optimization of the Chernoff bound to compute the smallest possible value of the parameter $w$ bounding the $\cc$ mechanism with failure $\delta$, beyond the asymptotic expression given in Lemma~\ref{lem:co}. The details of this optimization appear in Appendix~\ref{appendix: better computation}. This reduced the quantile gap assumption and maximized the number of quantiles that $\algsq$ is able to answer.

\begin{figure}[t]
    \centering
    \includegraphics[width=\linewidth]{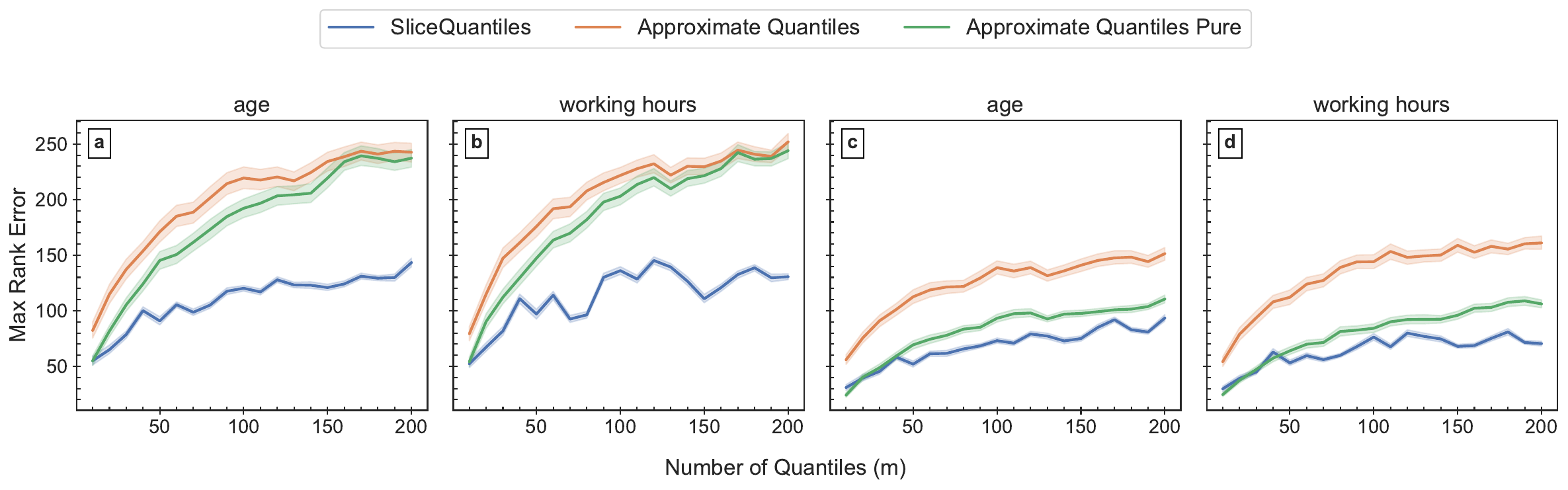}
    \caption{\small Experiments on \texttt{AdultAge} and \texttt{AdultHours} datasets. The datasets contain approximately $5 \cdot 10^5$ data points. Each experiment was run 200 times, with each run using a random sample from a set of 250 uniformly spaced quantiles. Plots $a$ and $b$ are for substitute adjacency, while $c$ and $d$ correspond to add/remove adjacency. Approximate Quantiles in the above figure refers to the algorithm ($\mathsf{AQ}$) from \cite{DBLP:conf/icml/KaplanSS22}. The privacy settings are: $(1, 10^{-16})$-DP for $\algsq$ and $\mathsf{AQ}$, and $(1, 0)$-DP for $\mathsf{AQ}$ with pure DP guarantee \cite{DBLP:conf/icml/KaplanSS22}.}
    \label{fig: utility}
\end{figure}

\paragraph{Results.} We plot the three algorithms in Figure~\ref{fig: utility}. The plots indicate  that $\algsq$ performs  better than $\mathsf{AQ}$, with a notable advantage under substitute adjacency. This is in line with our theoretical argument on a  tighter bound under this adjacency (Theorem \ref{thm:slice-quant-priv}). Consistent with our prior observations, $\mathsf{AQ}$ under approximate differential privacy performs worse than $\mathsf{AQ}$ with pure differential privacy, due to our lower choice of $\delta$. Because its utility guarantee is independent of $\delta$, $\algsq$ is able to circumvent this issue.

\section{Conclusion}

In this paper, we have proposed new mechanisms for approximating multiple quantiles on a dataset, satisfying both $\varepsilon$ and $(\epsilon,\delta)$ differential privacy. As long as the minimum gap between the queried quantiles is sufficiently large, the mechanisms achieve error with a better dependence on the number of quantiles and $\delta$ than prior work. Our experimental results demonstrate that these mechanisms outperform prior work in practice, in particular when the number of quantiles is large. An interesting question for future directions is to explore if a more careful analysis could reduce the minimum gap requirement or if other practical mechanisms for differentially private quantiles could further improve accuracy of computing many quantiles privately in practice.
\section*{Acknowledgments}
Anders Aamand and Rasmus Pagh were supported by the VILLUM Foundation grant 54451.

Jacob Imola and Rasmus Pagh were supported by a Data Science Distinguished Investigator grant from Novo Nordisk Fonden

Fabrizio Boninsegna was supported in part by the
MUR PRIN 20174LF3T8 AHeAD project, and by MUR
PNRR CN00000013 National Center for HPC, Big Data
and Quantum Computing.

The research described in this paper has also received funding from the European Research Council (ERC) under the European Union’s Horizon 2020 research and innovation programme under grant agreement No 803096 (SPEC) and the Danish Independent Research Council under Grant-ID DFF-2064-00016B (YOSO).

\bibliographystyle{plainnat}
\bibliography{bibliography}

@inproceedings{DBLP:conf/icml/KaplanSS22,
  author       = {Haim Kaplan and
                  Shachar Schnapp and
                  Uri Stemmer},
  editor       = {Kamalika Chaudhuri and
                  Stefanie Jegelka and
                  Le Song and
                  Csaba Szepesv{\'{a}}ri and
                  Gang Niu and
                  Sivan Sabato},
  title        = {Differentially Private Approximate Quantiles},
  booktitle    = {International Conference on Machine Learning, {ICML} 2022, 17-23 July
                  2022, Baltimore, Maryland, {USA}},
  series       = {Proceedings of Machine Learning Research},
  volume       = {162},
  pages        = {10751--10761},
  publisher    = {{PMLR}},
  year         = {2022},
  url          = {https://proceedings.mlr.press/v162/kaplan22a.html},
  timestamp    = {Tue, 12 Jul 2022 17:36:52 +0200},
  biburl       = {https://dblp.org/rec/conf/icml/KaplanSS22.bib},
  bibsource    = {dblp computer science bibliography, https://dblp.org}
}

@inproceedings{10.1145/3188745.3188946,
author = {Bun, Mark and Dwork, Cynthia and Rothblum, Guy N. and Steinke, Thomas},
title = {Composable and versatile privacy via truncated CDP},
year = {2018},
isbn = {9781450355599},
publisher = {Association for Computing Machinery},
address = {New York, NY, USA},
url = {https://doi.org/10.1145/3188745.3188946},
doi = {10.1145/3188745.3188946},
abstract = {We propose truncated concentrated differential privacy (tCDP), a refinement of differential privacy and of concentrated differential privacy. This new definition provides robust and efficient composition guarantees, supports powerful algorithmic techniques such as privacy amplification via sub-sampling, and enables more accurate statistical analyses. In particular, we show a central task for which the new definition enables exponential accuracy improvement.},
booktitle = {Proceedings of the 50th Annual ACM SIGACT Symposium on Theory of Computing},
pages = {74–86},
numpages = {13},
keywords = {algorithmic stability, differential privacy, subsampling},
location = {Los Angeles, CA, USA},
series = {STOC 2018}
}

@inproceedings{10.1145/3313276.3316312,
author = {Alon, Noga and Livni, Roi and Malliaris, Maryanthe and Moran, Shay},
title = {Private {PAC} learning implies finite {L}ittlestone dimension},
year = {2019},
isbn = {9781450367059},
publisher = {Association for Computing Machinery},
address = {New York, NY, USA},
url = {https://doi.org/10.1145/3313276.3316312},
doi = {10.1145/3313276.3316312},
abstract = {We show that every approximately differentially private learning algorithm (possibly improper) for a class H with Littlestone dimension d requires Ω(log*(d)) examples. As a corollary it follows that the class of thresholds over ℕ can not be learned in a private manner; this resolves open questions due to [Bun et al. 2015] and [Feldman and Xiao, 2015]. We leave as an open question whether every class with a finite Littlestone dimension can be learned by an approximately differentially private algorithm.},
booktitle = {Proceedings of the 51st Annual ACM SIGACT Symposium on Theory of Computing},
pages = {852–860},
numpages = {9},
keywords = {PAC learning, Littlestone dimension, Differential Privacy},
location = {Phoenix, AZ, USA},
series = {STOC 2019}
}

@inproceedings{DBLP:conf/icml/GillenwaterJK21,
  author       = {Jennifer Gillenwater and
                  Matthew Joseph and
                  Alex Kulesza},
  editor       = {Marina Meila and
                  Tong Zhang},
  title        = {Differentially Private Quantiles},
  booktitle    = {Proceedings of the 38th International Conference on Machine Learning,
                  {ICML} 2021, 18-24 July 2021, Virtual Event},
  series       = {Proceedings of Machine Learning Research},
  volume       = {139},
  pages        = {3713--3722},
  publisher    = {{PMLR}},
  year         = {2021},
  url          = {http://proceedings.mlr.press/v139/gillenwater21a.html},
  timestamp    = {Wed, 25 Aug 2021 17:11:17 +0200},
  biburl       = {https://dblp.org/rec/conf/icml/GillenwaterJK21.bib},
  bibsource    = {dblp computer science bibliography, https://dblp.org}
}

@article{doi:10.1137/140991844,
author = {Feldman, Vitaly and Xiao, David},
title = {Sample Complexity Bounds on Differentially Private Learning via Communication  Complexity},
journal = {SIAM Journal on Computing},
volume = {44},
number = {6},
pages = {1740-1764},
year = {2015},
doi = {10.1137/140991844},
URL = {https://doi.org/10.1137/140991844},
eprint = {https://doi.org/10.1137/140991844},
abstract = { In this work we analyze the sample complexity of classification by differentially private algorithms. Differential privacy is a strong and well-studied notion of privacy introduced by Dwork et al. [Lecture Notes in Comput. Sci. 3876, Springer, New York, 2006, pp. 265--284] that ensures that the output of an algorithm leaks little information about the data point provided by any of the participating individuals. Sample complexity of private probably approximately correct (PAC) and agnostic learning was studied in a number of prior works starting with Kasiviswanathan et al. [SIAM J. Comput., 40 (2011), pp. 793--826]. However, a number of basic questions remain open [A. Beimel, S. P. Kasiviswanathan, and K. Nissim, Lecture Notes in Comput. Sci. 5978, Springer, New York, 2006, pp. 437--454; K. Chaudhuri and D. Hsu, Proceedings of Conference in Learning Theory, 2011, pp. 155--186; A. Beimel, K. Nissim, and U. Stemmer, Proceedings of the 4th Conference on Innovations in Theoretical Computer Science, 2013, pp. 97--110; A. Beimel, K. Nissim, and U. Stemmer, Proceedings of APPROX+RANDOM, 2013, pp. 363--378], most notably whether learning with privacy requires more samples than learning without privacy. We show that the sample complexity of learning with (pure) differential privacy can be arbitrarily higher than the sample complexity of learning without the privacy constraint or the sample complexity of learning with approximate differential privacy. Our second contribution and the main tool is an equivalence between the sample complexity of (pure) differentially private learning of a concept class \$C\$ (\$\mathrm{SCDP}(C)\$) and the randomized one-way communication complexity of the evaluation problem for concepts from \$C\$. Using this equivalence we prove the following bounds: (a) \$\mathrm{SCDP}(C) = \Omega(\mathrm{LDim}(C))\$, where \$\mathrm{LDim}(C)\$ is the Littlestone's dimension characterizing the number of mistakes in the online-mistake-bound learning model [N. Littlestone, Machine Learning, 2 (1987), pp. 285--318]. Known bounds on \$\mathrm{LDim}(C)\$ then imply that \$\mathrm{SCDP}(C)\$ can be much higher than the Vapnik--Chervonenkis dimension of \$C\$. (b) For any \$t\$, there exists a class \$C\$ such that \$\mathrm{LDim}(C)=2\$ but \$\mathrm{SCDP}(C) \geq t\$. (c) For any \$t\$, there exists a class \$C\$ such that the sample complexity of (pure) \$\alpha\$-differentially private PAC learning is \$\Omega(t/\alpha)\$ but the sample complexity of the approximate \$(\alpha,\beta)\$-differentially private PAC learning is \$O(\log(1/\beta)/\alpha)\$. This resolves an open problem from Beimel, Nissim, and Stemmer [Proceedings of APPROX+RANDOM, 2013, pp. 363--378]. }
}

@inproceedings{KKM+20,
author = {Haim Kaplan and Katrina Ligett and Yishay Mansour and Moni Naor and Uri Stemmer},
title = {Privately Learning Thresholds: Closing the Exponential Gap},
year = {2020},
booktitle = {Proceedings of Thirty Third Conference on Learning Theory (COLT)}
}

@inproceedings{10.1109/FOCS.2015.45,
author = {Bun, Mark and Nissim, Kobbi and Stemmer, Uri and Vadhan, Salil},
title = {Differentially Private Release and Learning of Threshold Functions},
year = {2015},
isbn = {9781467381918},
publisher = {IEEE Computer Society},
address = {USA},
url = {https://doi.org/10.1109/FOCS.2015.45},
doi = {10.1109/FOCS.2015.45},
abstract = {We prove new upper and lower bounds on the sample complexity of (&#x03B5;, &#x03B4;) differentially private algorithms for releasing approximate answers to threshold functions. A threshold function cx over a totally ordered domain X evaluates to cx(y) = 1 if y &#x2264; x, and evaluates to 0 otherwise. We give the first nontrivial lower bound for releasing thresholds with (&#x03B5;, &#x03B4;) differential privacy, showing that the task is impossible over an infinite domain X, and moreover requires sample complexity n &#x2265; (log* |X|), which grows with the size of the domain. Inspired by the techniques used to prove this lower bound, we give an algorithm for releasing thresholds with n &#x2264; 2(1+o(1)) log* |X| samples. This improves the previous best upper bound of 8(1+o(1)) log* |X| (Beimel et al., RANDOM'13). Our sample complexity upper and lower bounds also apply to the tasks of learning distributions with respect to Kolmogorov distance and of properly PAC learning thresholds with differential privacy. The lower bound gives the first separation between the sample complexity of properly learning a concept class with (&#x03B5;, &#x03B4;) differential privacy and learning without privacy. For properly learning thresholds in 'dimensions, this lower bound extends to n &#x2265; (l &#x2219; log* |X|). To obtain our results, we give reductions in both directions from releasing and properly learning thresholds and the simpler interior point problem. Given a database D of elements from X, the interior point problem asks for an element between the smallest and largest elements in D. We introduce new recursive constructions for bounding the sample complexity of the interior point problem, as well as further reductions and techniques for proving impossibility results for other basic problems in differential privacy.},
booktitle = {Proceedings of the 2015 IEEE 56th Annual Symposium on Foundations of Computer Science (FOCS)},
pages = {634–649},
numpages = {16},
series = {FOCS '15}
}

@inproceedings{10.1145/3564246.3585148,
author = {Cohen, Edith and Lyu, Xin and Nelson, Jelani and Sarl\'{o}s, Tam\'{a}s and Stemmer, Uri},
title = {Optimal Differentially Private Learning of Thresholds and Quasi-Concave Optimization},
year = {2023},
isbn = {9781450399135},
publisher = {Association for Computing Machinery},
address = {New York, NY, USA},
url = {https://doi.org/10.1145/3564246.3585148},
doi = {10.1145/3564246.3585148},
abstract = {The problem of learning threshold functions is a fundamental one in machine learning. Classical learning theory implies sample complexity of O(ξ−1 log(1/β)) (for generalization error ξ with confidence 1−β). The private version of the problem, however, is more challenging and in particular, the sample complexity must depend on the size |X| of the domain. Progress on quantifying this dependence, via lower and upper bounds, was made in a line of works over the past decade. In this paper, we finally close the gap for approximate-DP and provide a nearly tight upper bound of O(log* |X|), which matches a lower bound by Alon et al (that applies even with improper learning) and improves over a prior upper bound of O((log* |X|)1.5) by Kaplan et al. We also provide matching upper and lower bounds of Θ(2log*|X|) for the additive error of private quasi-concave optimization (a related and more general problem). Our improvement is achieved via the novel Reorder-Slice-Compute paradigm for private data analysis which we believe will have further applications.},
booktitle = {Proceedings of the 55th Annual ACM Symposium on Theory of Computing},
pages = {472–482},
numpages = {11},
keywords = {threshold functions, differential privacy, PAC learning},
location = {Orlando, FL, USA},
series = {STOC 2023}
}

@inproceedings{DBLP:conf/stoc/Smith11,
  author       = {Adam D. Smith},
  editor       = {Lance Fortnow and
                  Salil P. Vadhan},
  title        = {Privacy-preserving statistical estimation with optimal convergence
                  rates},
  booktitle    = {Proceedings of the 43rd {ACM} Symposium on Theory of Computing, {STOC}
                  2011, San Jose, CA, USA, 6-8 June 2011},
  pages        = {813--822},
  publisher    = {{ACM}},
  year         = {2011},
  url          = {https://doi.org/10.1145/1993636.1993743},
  doi          = {10.1145/1993636.1993743},
  timestamp    = {Tue, 20 Aug 2024 08:48:08 +0200},
  biburl       = {https://dblp.org/rec/conf/stoc/Smith11.bib},
  bibsource    = {dblp computer science bibliography, https://dblp.org}
}

@article{DBLP:journals/tissec/ChanSS11,
  author       = {T.{-}H. Hubert Chan and
                  Elaine Shi and
                  Dawn Song},
  title        = {Private and Continual Release of Statistics},
  journal      = {{ACM} Trans. Inf. Syst. Secur.},
  volume       = {14},
  number       = {3},
  pages        = {26:1--26:24},
  year         = {2011},
  url          = {https://doi.org/10.1145/2043621.2043626},
  doi          = {10.1145/2043621.2043626},
  timestamp    = {Tue, 06 Nov 2018 12:50:47 +0100},
  biburl       = {https://dblp.org/rec/journals/tissec/ChanSS11.bib},
  bibsource    = {dblp computer science bibliography, https://dblp.org}
}

@inproceedings{DBLP:conf/stoc/DworkNPR10,
  author       = {Cynthia Dwork and
                  Moni Naor and
                  Toniann Pitassi and
                  Guy N. Rothblum},
  editor       = {Leonard J. Schulman},
  title        = {Differential privacy under continual observation},
  booktitle    = {Proceedings of the 42nd {ACM} Symposium on Theory of Computing, {STOC}
                  2010, Cambridge, Massachusetts, USA, 5-8 June 2010},
  pages        = {715--724},
  publisher    = {{ACM}},
  year         = {2010},
  url          = {https://doi.org/10.1145/1806689.1806787},
  doi          = {10.1145/1806689.1806787},
  timestamp    = {Wed, 14 Nov 2018 10:51:37 +0100},
  biburl       = {https://dblp.org/rec/conf/stoc/DworkNPR10.bib},
  bibsource    = {dblp computer science bibliography, https://dblp.org}
}

@article{andersson2024count,
  title={Count on your elders: Laplace vs {G}aussian noise},
  author={Andersson, Joel Daniel and Pagh, Rasmus and Steiner, Teresa Anna and Torkamani, Sahel},
  journal={arXiv preprint arXiv:2408.07021},
  year={2024}
}

@inproceedings{DBLP:conf/tcc/BunS16,
  author       = {Mark Bun and
                  Thomas Steinke},
  editor       = {Martin Hirt and
                  Adam D. Smith},
  title        = {Concentrated Differential Privacy: Simplifications, Extensions, and
                  Lower Bounds},
  booktitle    = {Theory of Cryptography - 14th International Conference, {TCC} 2016-B,
                  Beijing, China, October 31 - November 3, 2016, Proceedings, Part {I}},
  series       = {Lecture Notes in Computer Science},
  volume       = {9985},
  pages        = {635--658},
  year         = {2016},
  url          = {https://doi.org/10.1007/978-3-662-53641-4\_24},
  doi          = {10.1007/978-3-662-53641-4\_24},
  timestamp    = {Tue, 20 Aug 2024 08:48:08 +0200},
  biburl       = {https://dblp.org/rec/conf/tcc/BunS16.bib},
  bibsource    = {dblp computer science bibliography, https://dblp.org}
}

@inproceedings{lalanne2023private,
  title={Private statistical estimation of many quantiles},
  author={Lalanne, Cl{\'e}ment and Garivier, Aur{\'e}lien and Gribonval, R{\'e}mi},
  booktitle={International Conference on Machine Learning},
  pages={18399--18418},
  year={2023},
  organization={PMLR}
}

@inproceedings{DBLP:conf/alt/Cesar021,
  author       = {Mark Cesar and
                  Ryan Rogers},
  editor       = {Vitaly Feldman and
                  Katrina Ligett and
                  Sivan Sabato},
  title        = {Bounding, Concentrating, and Truncating: Unifying Privacy Loss Composition
                  for Data Analytics},
  booktitle    = {Algorithmic Learning Theory, 16-19 March 2021, Virtual Conference,
                  Worldwide},
  series       = {Proceedings of Machine Learning Research},
  volume       = {132},
  pages        = {421--457},
  publisher    = {{PMLR}},
  year         = {2021},
  url          = {http://proceedings.mlr.press/v132/cesar21a.html},
  timestamp    = {Fri, 26 Mar 2021 15:45:50 +0100},
  biburl       = {https://dblp.org/rec/conf/alt/Cesar021.bib},
  bibsource    = {dblp computer science bibliography, https://dblp.org}
}

@misc{adult_2,
  author       = {Becker, Barry and Kohavi, Ronny},
  title        = {{Adult}},
  year         = {1996},
  howpublished = {UCI Machine Learning Repository},
  note         = {{DOI}: https://doi.org/10.24432/C5XW20}
}

@article{DBLP:journals/corr/abs-2210-00597,
  author       = {Thomas Steinke},
  title        = {Composition of Differential Privacy {\&} Privacy Amplification
                  by Subsampling},
  journal      = {CoRR},
  volume       = {abs/2210.00597},
  year         = {2022},
  url          = {https://doi.org/10.48550/arXiv.2210.00597},
  doi          = {10.48550/ARXIV.2210.00597},
  eprinttype    = {arXiv},
  eprint       = {2210.00597},
  timestamp    = {Fri, 07 Oct 2022 15:24:59 +0200},
  biburl       = {https://dblp.org/rec/journals/corr/abs-2210-00597.bib},
  bibsource    = {dblp computer science bibliography, https://dblp.org}
}

@article{inusah2006discrete,
  title={A discrete analogue of the Laplace distribution},
  author={Inusah, Seidu and Kozubowski, Tomasz J},
  journal={Journal of statistical planning and inference},
  volume={136},
  number={3},
  pages={1090--1102},
  year={2006},
  publisher={Elsevier}
}

@article{alabi2022bounded,
  title={Bounded space differentially private quantiles},
  author={Alabi, Daniel and Ben-Eliezer, Omri and Chaturvedi, Anamay},
  journal={arXiv preprint arXiv:2201.03380},
  year={2022}
}

@article{tzamos2020optimal,
  title={Optimal private median estimation under minimal distributional assumptions},
  author={Tzamos, Christos and Vlatakis-Gkaragkounis, Emmanouil-Vasileios and Zadik, Ilias},
  journal={Advances in Neural Information Processing Systems},
  volume={33},
  pages={3301--3311},
  year={2020}
}

@article{aamand2025lightweight,
  title={Lightweight Protocols for Distributed Private Quantile Estimation},
  author={Aamand, Anders and Boninsegna, Fabrizio and Gentle, Abigail and Imola, Jacob and Pagh, Rasmus},
  journal={arXiv preprint arXiv:2502.02990},
  year={2025}
}

@article{duchi2018minimax,
  title={Minimax optimal procedures for locally private estimation},
  author={Duchi, John C and Jordan, Michael I and Wainwright, Martin J},
  journal={Journal of the American Statistical Association},
  volume={113},
  number={521},
  pages={182--201},
  year={2018},
  publisher={Taylor \& Francis}
}
\newpage

\appendix
\section{Details From Continual Counting}\label{app:cc}
In summary, the binary tree mechanism~\cite{DBLP:conf/stoc/DworkNPR10, DBLP:journals/tissec/ChanSS11} achieves $\epsilon$-DP when the input vector $\mathbf{r}$ is changed to $\mathbf{r} + \mathbf{e}$, where $\mathbf{e}$ is a contiguous $0/1$ or $0/-1$-valued vector. The mechanism achieves this by constructing a segment tree whose leaves are the intervals $[0,1), \ldots, [m-1, m)$, and sampling a Laplace random variable $\eta_I$ for each node $I$ of the tree. For $T=\lceil\log_2(m+1)\rceil$, let $I_1,\ldots, I_{T}$ denote the interval decomposition of $[0,i)$. 
Then, the estimate for $\tdr_i$ is given by $r_i + \sum_{j = 1}^T \eta_{I_j}$ rounded to the nearest integer (by convention, for integers $k$ the number $k-1/2$ is rounded up to $k$). The correlated noise being added to each $r_i$ allows the total error to grow only logarithmically with $m$. The final rounding ensures that the noisy ranks are integer valued. The rounding is private due to post processing and it only incurs an additional rank error of at most $1/2$

\subsection{Proof of Lemma~\ref{lem:co}}
We start with the following concentration lemma:
\begin{corollary}[From Corollary 2.9 of \cite{DBLP:journals/tissec/ChanSS11}]
\label{corollary: continual counting}
Suppose $\gamma_i$'s are independent random variables, where each $\gamma_i$ has Laplace distribution $\mathsf{Lap}(b_i)$. Suppose $Y = \sum_{i}\gamma_i$, $b_M=\max_i b_i$ and $\delta \in (0,1)$. Let $\nu = \max\{\sqrt{\sum_i b_i ^2}, b_M\sqrt{\ln(2/\delta)}\}$. Then $\textnormal{Pr}[|Y|>\nu\sqrt{8\ln(2/\delta)}]\leq \delta$.
\end{corollary}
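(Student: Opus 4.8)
The plan is to prove this as a standard Chernoff / moment-generating-function tail bound for a sum of independent Laplace variables, tracking constants carefully so that the two-regime quantity $\nu = \max\{\sqrt{\sum_i b_i^2},\, b_M\sqrt{\ln(2/\delta)}\}$ emerges naturally from the analysis. First I would recall that for $\gamma \sim \mathsf{Lap}(b)$ the moment generating function is $\E[e^{h\gamma}] = (1 - h^2 b^2)^{-1}$, valid for $|h| < 1/b$, so $\ln \E[e^{h\gamma}] = -\ln(1 - h^2 b^2)$. Using the elementary bound $-\ln(1-x) \le 2x$ for $x \in [0,\tfrac12]$ (the left side is convex and vanishes at $0$, so it lies below the chord, whose slope $2\ln 2 < 2$), I restrict attention to multipliers with $|h| \le 1/(\sqrt{2}\,b_M)$, which forces $h^2 b_i^2 \le \tfrac12$ for every $i$. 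Writing $\sigma^2 = \sum_i b_i^2$ and using independence, this gives
\[
\ln \E[e^{hY}] \;=\; \sum_i -\ln(1 - h^2 b_i^2) \;\le\; 2 h^2 \sigma^2, \qquad |h| \le \tfrac{1}{\sqrt{2}\,b_M},
\]
and Markov's inequality applied to $e^{hY}$ yields $\Pr[Y > t] \le \exp(2 h^2 \sigma^2 - h t)$ for every admissible $h$ and every $t>0$.

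Next I would set $t = \nu \sqrt{8\ln(2/\delta)}$ and split on which term defines $\nu$. If $\sigma \ge b_M\sqrt{\ln(2/\delta)}$, so that $\nu = \sigma$, the Chernoff-optimal choice $h = t/(4\sigma^2)$ gives exponent $-t^2/(8\sigma^2) = -\ln(2/\delta)$, hence $\Pr[Y>t] \le \delta/2$; and one checks that this $h$ indeed satisfies $|h| \le 1/(\sqrt2\,b_M)$, which reduces exactly to the case hypothesis $b_M\sqrt{\ln(2/\delta)} \le \sigma$. If instead $b_M\sqrt{\ln(2/\delta)} > \sigma$, so that $\nu = b_M\sqrt{\ln(2/\delta)}$ and $t = 2\sqrt2\,b_M\ln(2/\delta)$, I would take the boundary value $h = 1/(\sqrt2\,b_M)$, for which $2h^2\sigma^2 = \sigma^2/b_M^2 < \ln(2/\delta)$ and $ht = 2\ln(2/\delta)$, so again the exponent is at most $-\ln(2/\delta)$ and $\Pr[Y>t] \le \delta/2$. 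Applying the identical argument to $-Y$, using that $\mathsf{Lap}(b)$ is symmetric, and taking a union bound over the two tails gives $\Pr[|Y| > \nu\sqrt{8\ln(2/\delta)}] \le \delta$.

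The routine parts are the Laplace MGF computation and the bound $-\ln(1-x)\le 2x$; the only place that genuinely needs care is verifying, in the first case, that the Chernoff-optimal $h = t/(4\sigma^2)$ stays inside the admissible range $|h|\le 1/(\sqrt2\,b_M)$ dictated by the MGF truncation. This is precisely where the case split is used, and it is the structural reason the statement takes the maximum of the two quantities rather than just $\sqrt{\sum_i b_i^2}$: when $b_M$ is large relative to $\sigma$ the quadratic (sub-Gaussian) tail is no longer available and one must fall back on the linear (sub-exponential) regime via the boundary value of $h$. I would also double-check that the constant $\sqrt8$ inside the probability is consistent with both the $t^2/(8\sigma^2)$ exponent in the first case and the explicit $2\ln(2/\delta)$ cancellation in the second.
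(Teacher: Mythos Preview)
Your argument is correct: the MGF bound $-\ln(1-x)\le 2x$ on $[0,\tfrac12]$, the two-case analysis on whether $\sigma$ or $b_M\sqrt{\ln(2/\delta)}$ dominates, and the verification that the Chernoff-optimal $h$ stays in the admissible range in the first case are all sound. Note, however, that the paper does not actually prove this statement---it is quoted verbatim as Corollary~2.9 of Chan--Shi--Song and used as a black box in the proof of Lemma~\ref{lem:co}. Your write-up is essentially a reconstruction of the original Chan--Shi--Song proof (the same MGF/Chernoff approach with the same case split), so there is no methodological difference to discuss; you have simply supplied the missing argument that the paper outsources to the citation.
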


\begin{proof}[Proof of Lemma~\ref{lem:co}]
    Define $\texttt{round}:\R^m\to \Z^m$ to be the rounding function rounding each coordinate of a vector $s\in \R^m$ to the nearest integer (rounding up in case of ties). Note that $\texttt{round}(s+x)=\texttt{round}(s)+x$ whenever $x\in \Z^m$.
    To prove the first guarantee, observe that since the binary tree mechanism is an additive noise mechanism (i.e. $\cc(\textbf{r}) = \texttt{round}(\textbf{r}+\calN)$, where $\calN$ is a noisy vector) the privacy guarantee of the binary tree mechanism implies that
    \[
        \Pr[\cc(\mathbf{r}) = \tilde{\mathbf{r}} ]  = \Pr[\texttt{round}(\mathbf{r} + \calN) = \tilde{\mathbf{r}} ] \leq e^{\epsilon} \Pr[\texttt{round}(\mathbf{r} - \mathbf{e} + \calN) = \tilde{\mathbf{r}}] = e^\epsilon\Pr[\cc(\mathbf{r}) = \tilde{\mathbf{r}} + \mathbf{e} ]
    \]
    To show utility, each coordinate of $\calN$ is the sum of at most $T$ independent Laplace random variables with variance $T^2/\varepsilon^2$ to each estimate. 
    From Corollary~\ref{corollary: continual counting} we have that, with probability at least $1-\beta/m$, each estimate has an error at most $\frac{T}{\varepsilon}\sqrt{8\ln(2m/\beta)}\max\{\sqrt{T}, \sqrt{\ln(2m/\beta)}\}$. The claim follows by a union bound and analysing the asymptotic of $\frac{T}{\varepsilon}\sqrt{8\ln(2m/\beta)}\max\{\sqrt{T}, \sqrt{\ln(2m/\beta)}\} \leq \frac{T}{\varepsilon}\sqrt{8\ln(2m/\beta)}(\sqrt{T} + \sqrt{\ln(2m/\beta)})$.
    \end{proof}

\section{Omitted Proofs From Section~\ref{sec:algo-formal}}\label{app:priv-proofs}
\def\bfr{\tilde{\mathbf{r}}}
\def\bfe{\tilde{\mathbf{e}}}
\def\N{\mathbb{N}}
\def\shiftmap{\mathsf{Shift}}

\subsection{Proof of Lemma~\ref{lem:map}}\label{app:map}
\begin{proof}
    Suppose first that $X'$ adds a point to $X$. This means that there is a minimal index $s$ such that $x_{(i)} = x_{(i)}'$ for all $i < s$, and  $x_{(i+1)}' = x_{(i)}$ for all $i \geq s$. We will define $F^+(\tilde{\mathbf{r}}) = \tilde{\mathbf{r}} + \mathbf{e}_{\tilde{\mathbf{r}}}$, where each coordinate of $\mathbf{e}_{\tilde{\mathbf{r}}}$ is defined by
    \[
        e_i = 
        \begin{cases}
            0 & \tdr_i-h \leq s\\
            1 & \tdr_i-h > s.
        \end{cases}
    \]
    It is clear that this vector belongs to $\good_{m,n+1,h}$ and that Property (3) of the map is satisfied. To see injectivity, observe that if $\tilde{\mathbf{r}}, \tilde{\mathbf{r}}'$ are different, but mapped to the same output, then they must have different vectors $\mathbf{e}, \mathbf{e}'$. Furthermore, the coordinates of $\tilde{\mathbf{r}}, \tilde{\mathbf{r}}'$ only differ by $1$. These two things can only happen if $\tdr_{i^*}-h = s$ and $\tdr_{i^*}'-h = s+1$ for an index $i^*$. However, this will then produce $e_{i^*} = 0$ and $e_{i^*}' = 1$, resulting in the $i^*$ coordinates of $F^+(\tilde{\mathbf{r}}), F^+(\tilde{\mathbf{r}}')$ to be $h+s$ and $h +s+2$, respectively, making it impossible for equality.

    Finally, Property (2) holds because the slices $S_i, S_i'$ will disagree only if $\tdr_i \in [s-h, s+h]$, and the sum of substitution distances is $1$.

    In the case that $X'$ removes a point from $X$, then $x_{(i)} = x_{(i)}'$ for all $i < s$ and $x_{(i)} = x_{(i-1)}'$ for all $i \geq s$. We define the map $F^-(\tilde{\mathbf{r}}) = \tilde{\mathbf{r}} + \mathbf{e}_{\tilde{\mathbf{r}}}$, where each coordinate of $\mathbf{e}_{\tilde{\mathbf{r}}}$ is instead defined by
    \[
        e_i = 
        \begin{cases}
            0 & i = 1 \vee \tdr_{i-1}-h \leq s\\
            -1 & i > 1 \wedge\tdr_{i-1}-h > s.
        \end{cases}
    \]
    Again, Property (3) and membership in $\good_{m, n, h-1}$ is immediate. To argue injectivity, observe that if $\tilde{\mathbf{r}}, \tilde{\mathbf{r}}'$ are different, but mapped to the same output, then they must have different vectors $\mathbf{e}, \mathbf{e}'$. This is only possible if $\tdr_{i^*-1}-h = s$ and $\tdr_{i^*-1}'-h = s+1$ for some index $i^* > 1$. However, this then implies that $e_{i^*-1} = e_{i^*-1}' = 0$, resulting in $F^-(\tilde{r}_{i^*})= \tilde{r}_{i^*}$ and $F^-(\tilde{r}'_{i^*-1})= \tilde{r}'_{i^*-1}$ and forcing the maps to still have different outputs $F^-(\tilde{r}_{i^*-1})\neq F^-(\tilde{r}'_{i^*})$.

    Property (2) follows because the slices $S_i, S_i'$ disagree only in potentially two locations; namely the index $i^*$ where $a \in [\tdr_{i^*} - h, \tdr_{i^*} + h]$ (if it exists), and the index $j^*$ which is the minimum index such that $a < \tdr_{j^*} - h$. Each disagreement adds one to the substitution distance, and thus the total sum is at most $2$.

    Note that at first glance, it may appear as though the case where $X'$ removes a point from $X$ could be solved analogously to the case where $X'$ adds a point, defining $e_i$ as:
    \[
        e_i = 
        \begin{cases}
            0 & \tdr_i-h \leq s\\
            -1 & \tdr_i-h > s.
        \end{cases}
    \]
    However, this choice would not lead to an injective mapping. In particular, we can set $\tdr_{i^*}-h = s$ and $\tdr_{i^*}'-h = s+1$ for an index $i^*$, which are different, but mapped to the same output. Specifically, this will produce $e_{i^*} = 0$ and $e_{i^*}' = -1$, resulting in $F^-(\tilde{r}_{i^*})= s+h=F^-(\tilde{r}'_{i^*})$. 

\end{proof}

\subsection{Proof of Theorem~\ref{thm:slice-quant-priv}}\label{app:slice-quant-priv}

\begin{proof}

    We will first prove add/remove privacy.

    Let $\calA(X)$ denote $\algsq$ run on input $X$, and let
    $\calA(X, \tilde{\mathbf{r}})$ denote the algorithm conditioned on the noisy ranks $\tilde{\mathbf{r}} = (\tdr_1, \ldots, \tdr_{m})$.
    Let $F^+,F^-$ denote the maps guaranteed by Lemma~\ref{lem:map}. We will first assume that $X'$ is larger than $X$, and thus we will use $F^+$ in the following. For any output set $Z$, we have
    \begin{align*}
        \Pr[\calA(X) \in Z]
        &=\sum_{\bfr \in \mathbb{Z}^m}\Pr\left[\calA(X, \bfr) \in Z\right] \Pr[\cc(\mathbf{r}) = \tilde{\mathbf{r}}] \\
        &\leq \Pr[\cc(\mathbf{r}) \notin \good_{m, n, h}] + \sum_{\bfr \in \good_{m,n,h}} \Pr\left[\calA(X, \bfr) \in Z\right] \Pr[\cc(\mathbf{r}) = \tilde{\mathbf{r}}] \\
        &\leq \delta + \sum_{\bfr \in \good_{m,n,h}} \Pr\left[\calA(X, \bfr) \in Z\right] \Pr[\cc(\mathbf{r}) = \tilde{\mathbf{r}}] \\
        &\leq \delta + \sum_{\bfr \in \good_{m,n,h}} e^{\epsilon_2}\Pr\left[\calA(X, F^+(\bfr)) \in Z\right] e^{\epsilon_1}\Pr[\cc(\mathbf{r}) = F^+(\tilde{\mathbf{r}})] \\
        &\leq \delta + e^{\epsilon_1+\epsilon_2}\sum_{\bfr \in \good_{m,n+1,h}} \Pr\left[\calA(X', \bfr) \in Z\right] \Pr[\cc(\mathbf{r}) = \tilde{\mathbf{r}}] \\
        &\leq \delta + e^{\epsilon_1+\epsilon_2}\Pr[\calA(X') \in Z],
    \end{align*}
    where the third line follows from Lemma~\ref{lem:co}, which shows $\Pr[\|\mathbf{r} - \tilde{\mathbf{r}}\|_\infty \geq w] \leq \delta$, and by the assumption that $\mathbf{r} \in \good_{m,n,h+w+1}$; the fourth follows from Properties (2) and (3) of Lemma~\ref{lem:map} along with Lemma~\ref{lem:co}; and the fifth follows from Property (1) of Lemma~\ref{lem:map}. When $X'$ is smaller than $X$, then we use the map $F^-$, and we obtain
    \begin{align*}
        \Pr[\calA(X) \in Z]
        &=\sum_{\bfr \in \mathbb{Z}^m}\Pr\left[\calA(X, \bfr) \in Z\right] \Pr[\cc(\mathbf{r}) = \tilde{\mathbf{r}}] \\
        &\leq \Pr[\cc(\mathbf{r}) \notin \good_{m, n-1, h+1}] + \sum_{\bfr \in \good_{m,n-1,h+1}} \Pr\left[\calA(X, \bfr) \in Z\right] \Pr[\cc(\mathbf{r}) = \tilde{\mathbf{r}}] \\
        &\leq \delta + \sum_{\bfr \in \good_{m,n-1,h+1}} \Pr\left[\calA(X, \bfr) \in Z\right] \Pr[\cc(\mathbf{r}) = \tilde{\mathbf{r}}] \\
        &\leq \delta + \sum_{\bfr \in \good_{m,n-1,h+1}} e^{2\epsilon_2}\Pr\left[\calA(X, F^-(\bfr)) \in Z\right] e^{\epsilon_1}\Pr[\cc(\mathbf{r}) = F^-(\tilde{\mathbf{r}})] \\
        &\leq \delta + e^{\epsilon_1+2\epsilon_2}\sum_{\bfr \in \good_{m,n-1,h}} \Pr\left[\calA(X', \bfr) \in Z\right] \Pr[\cc(\mathbf{r}) = \tilde{\mathbf{r}}] \\
        &\leq \delta + e^{\epsilon_1+2\epsilon_2}\Pr[\calA(X') \in Z],
    \end{align*}
    where the deductions are the same, 
    and the only change is the final parameter is $\epsilon_1 + 2\epsilon_2$ as the constant is higher.

    To prove substitution privacy, we may simply use the fact that for two neighboring datasets $X,X'$, there is a dataset $X_1$ such that $X_1$ may be obtained from either $X,X'$ by removing a point. Thus, from what we've already shown, the pair $X,X_1$ satisfies $(\epsilon_1 + 2\epsilon_2, \delta)$-DP, while $X_1, X'$ satisfies $(\epsilon_1 + \epsilon_2, \delta)$-DP. By group privacy, we have a final guarantee of $(2\epsilon_1 + 3\epsilon_2, \delta + \delta e^{\epsilon_1 + 2\epsilon_2})$ 
\end{proof}

\subsection{Proof of Lemma \ref{lem:pure-dp-conversion}}\label{app:lem:pure-dp-conversion}
\begin{proof}
    By the $(\epsilon, \delta)$-DP guarantee, the probability of observing any $y \in \calY$ may be bounded by
    \[
        \Pr[\calA(X) = y] \leq e^\epsilon \Pr[\calA(X') = y] + \delta.
    \]
    Now, we have
    \begin{align*}
        \Pr[\tilde{\calA}(X) = y] &= (1-\gamma)\Pr[\calA(X) = y] + \gamma \frac{1}{|\calY|} \\
        &\leq (1-\gamma) (e^{\epsilon}\Pr[\calA(X') = y] + \delta ) + \gamma \frac{1}{|\calY|} \\
        &= e^{\epsilon}\left((1-\gamma) (\Pr[\calA(X') = y] ) + \gamma \frac{1}{|\calY|}\right) + \delta (1-\gamma) + \gamma(1-e^\epsilon) \frac{1}{|\calY|} \\
        &= e^{\epsilon}\Pr[\tilde{\calA}(X') = y] + \delta (1-\gamma) + \gamma(1-e^\epsilon) \frac{1}{|\calY|} \\
        &\leq  e^{\epsilon}\Pr[\tilde{\calA}(X') = y] + \delta + \gamma(1-e^\epsilon) \frac{1}{|\calY|} \\
        &= e^{\epsilon} \Pr[\tilde{\calA}(X') = y].
    \end{align*}
\end{proof}
\section{Omitted Details From Section~\ref{sec:utility}}\label{app:utility}

\subsection{Proof of Theorem~\ref{thm:gen-util}}
\begin{proof}
    By a union bound, conditioned on Line~\ref{alg:line-check} not failing, with probability at least $1-\beta$, the returned quantiles $z_1, \ldots, z_m$ will satisfy $|\rank_X(z_i) - \tdr_i| \leq \alpha$. By Lemma~\ref{lem:co}, each $\tdr_i$ satisfies $\|\mathbf{r} - \tilde{\mathbf{r}}\|_\infty \leq \frac{3 \log (m)}{\epsilon} \log(\frac{m}{2\beta})$ with probability at least $1-\beta$. The bound follows from the triangle inequality.
\end{proof}

\subsection{Details on Exponential Mechanism}
Assuming that each point lies in $[a, b]$, the exponential mechanism samples a point $z \in [a,b]$ with probability $\exp(-\frac{\epsilon}{2}\textnormal{Err}_X(q,Z))$---here we will assume the quantile $q = \frac{1}{2}$ for the median. This can be implemented by sampling an interval $I_k = [x_{(k)}, x_{(k+1)}]$, with $x_{(0)} = a$ and $x_{(k+1)} = b$, with probability proportional to $\exp\left(-\frac{\varepsilon}{2}|rn-k|\right)|I_k|$, and then releasing a value uniformly sampled from the selected interval (see Appendix A in \cite{DBLP:conf/icml/KaplanSS22}).

The utility of the exponential mechanism for median estimation is as follows:

\begin{theorem}
\label{th: EM with slice}
Given a dataset $X \in [a, b]^n$ with minimum gap $g>0$, parameters $\beta \in (0,1)$ and $\varepsilon > 0$, let $\psi = \frac{b-a}{g}$. Then, the exponential mechanism is a $(h, 2h, \beta)$ algorithm for median estimation for $h = \left \lceil \frac{2}{\varepsilon}\log\left(\frac{2\psi}{\beta}\right)\right \rceil$.
\end{theorem}
\begin{proof}
It is sufficient to suppose the dataset has size $2h$, and to bound the probability of sampling in the interval $[a, x_{(1)}]$ and $[x_{(2h)}, b]$. Let $\mathcal{I} = \{[a, x_{(1)}],[x_{(1)}, x_{(2)}], \dots, [x_{(2h)}, b]\}$ be the set of intervals sampled by the exponential mechanism. The probability the sample $z$ lies in the interval $[a,x_{(r-h)}]$ is
\begin{equation*}
     \text{Pr}[z \in [0,x_{(1)}]] = \dfrac{e^{-\tfrac{\varepsilon}{2}h}(x_{(1)} - a)}{\sum_{i=1}^{2h}e^{-\tfrac{\varepsilon}{2}|i-h|}(x_{(i)} - x_{(i-1)})}\leq \frac{b-a}{g}e^{-\tfrac{\varepsilon}{2}h}
\end{equation*}
as $\sum_{i=1}^{2h}e^{-\tfrac{\varepsilon}{2}|i-h|}(x_{(i)} - x_{(i-1)}) \geq (x_{(h+1)}- x_{(h)})\geq g$. By our choice of $h$, this probability is at most $\frac{\beta}{2}$. The same upper bound can be found for the other extreme interval $[x_{(2h)}, b]$, and the result follows from a union bound.
\end{proof}


\subsection{Details on Threshold Queries}

We discuss how the existing work on threshold queries may be used to answer quantiles. These algorithms actually work by solving a much simpler interior point problem, where the goal on an input dataset $X$, is to simply return a point $z$ such that $x_{(1)} \leq z \leq x_{(n)}$. For a small enough dataset, the interior point algorithm can also provide a median estimate with low error.
\begin{lemma}
    (Adapted from Theorem 3.7 of~\cite{10.1145/3564246.3585148}): There exists an $(\epsilon, \delta)$-DP algorithm $\calA_{int}$ which is a $(\frac{500 \log^*|X|}{\epsilon} \log(\frac{1}{\delta})^2, \frac{1000 \log^*|X|}{\epsilon} \log(\frac{1}{\delta})^2,\delta \log^*|X|)$ algorithm for median estimation. 
\end{lemma}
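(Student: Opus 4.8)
The plan is to obtain $\calA_{int}$ from the interior-point subroutine of Cohen, Lyu, Nelson, Sarl\'os and Stemmer~\cite{10.1145/3564246.3585148} (Theorem 3.7 there) via the standard reduction that turns an interior-point algorithm into a rank-approximate median estimator by restricting to a window of order statistics around the median. On input $X$ of size $n \ge \ell$, I would form the sub-dataset $Y$ consisting of the $2\alpha+1$ order statistics $x_{(\lceil n/2\rceil - \alpha)},\ldots,x_{(\lceil n/2\rceil + \alpha)}$ closest to the median, run the interior-point algorithm on $Y$, and return its output $z$. Any $z$ with $\min Y \le z \le \max Y$ satisfies $|\rank_X(z) - n/2| \le \alpha + O(1)$, which is the required accuracy up to a constant slack that is absorbed into the constant $500$.

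For the accuracy and sample-size accounting I would invoke~\cite{10.1145/3564246.3585148}: their interior-point algorithm is $(\epsilon,\delta)$-DP and, on any input of size $k$ at least $\Theta(\tfrac{\log^* k}{\epsilon}\log^2(1/\delta))$, returns an interior point with probability at least $1 - \delta\log^* k$. Instantiating it on $Y$, whose size is $k = 2\alpha + 1 = \Theta(\tfrac{\log^*|X|}{\epsilon}\log^2(1/\delta)) = \ell - O(1)$, and using $\log^* k = \log^*|X| + O(1)$, the constants $500$ and $1000$ are chosen so that $k$ clears the sample-size threshold and the total failure probability is at most $\delta\log^*|X|$.

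For privacy I would verify that the preprocessing $X \mapsto Y$ maps neighbors to neighbors. If $X'$ is a substitute-neighbor of $X$, a short case analysis on where the ranks of the removed and inserted points fall relative to the fixed index window $[\lceil n/2\rceil - \alpha, \lceil n/2\rceil + \alpha]$ shows that either $Y = Y'$ or exactly one element of $Y$ leaves and one enters, so $Y$ and $Y'$ (both of the fixed size $2\alpha+1$) differ by at most one substitution. Since differential privacy is preserved under preprocessing that maps neighbors to neighbors, $\calA_{int}$ inherits $(\epsilon,\delta)$-DP from the interior-point algorithm; the add/remove case is handled identically, with $Y$ and $Y'$ differing by a single add/remove.

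I expect the substantive effort to be bookkeeping rather than conceptual: reconciling the exact form of the $\log^*$ term guaranteed by~\cite{10.1145/3564246.3585148} with the $\log^*|X|$ in the statement (their guarantee may, as originally phrased, refer to the domain, in which case one additionally needs their built-in bucketing/domain-reduction step so that the dependence becomes $\log^*|X|$), and fixing the constants so that a window of size $\approx \ell$ is simultaneously large enough for their algorithm to succeed and tight enough to keep the failure probability at $\delta\log^*|X|$. Once the interior-point statement is in hand as a black box, both the reduction and the privacy argument are routine.
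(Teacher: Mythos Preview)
The paper does not give a proof of this lemma: it is stated as ``Adapted from Theorem 3.7 of~\cite{10.1145/3564246.3585148}'' and the only justification offered is the preceding sentence, ``For a small enough dataset, the interior point algorithm can also provide a median estimate with low error.'' Read literally, that remark only covers inputs of size $n \approx \ell$ (an interior point of a list of size $n$ has rank error at most $n/2$), whereas the $(\alpha,\ell,\beta)$-definition in the paper requires the guarantee for \emph{all} $n \geq \ell$. Your windowing reduction is precisely what is needed to fill that gap, and your accuracy and privacy arguments are correct. In the paper's actual application (Algorithm~\ref{alg:slice-quant}) the slices always have size $2h+1 \approx \ell$, so the paper's one-line remark suffices for how the lemma is used, but your version matches the stated definition.

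Two small points. First, your add/remove claim is slightly misstated: since $|Y|=|Y'|=2\alpha+1$ regardless of $|X|$, the windows $Y,Y'$ cannot differ by an add/remove; the case analysis (which you sketch for substitution) in fact shows they differ by at most one \emph{substitution} even when $X,X'$ are add/remove neighbors, and that is what you need to invoke the interior-point algorithm's privacy. Second, as you anticipated, the $\log^*|X|$ in the statement is a notational mismatch with the rest of the paper (elsewhere the paper writes $\log^*(b)$ for the same quantity); the CLNSS guarantee is in terms of the domain, so your remark about needing their domain-reduction step to convert $\log^*(b)$ to $\log^*|X|$ is on point and is the only place where real bookkeeping is required.
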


This demonstrates that it is possible to instantiate $\sq$ with a better dependence on $b$, though the best-known constant factor of $1000$ means it is not a practical improvement.

The interior point algorithm may be used to answer threshold queries, which are queries of the form $F_X(z) = \frac{1}{n} \rank_X(z)$, with the following guarantee:

\begin{lemma}
    (Adaptive from Theorem 3.9 of~\cite{10.1145/3564246.3585148}): There exists an $(\epsilon, \delta)$-DP algorithm which, with probability $1-\beta$, can answer any threshold queries $F_X(z)$ with error $O(\frac{\log^*(b) \log(\frac{\log^*(b)}{\beta\delta})^2}{\epsilon n})$.
\end{lemma}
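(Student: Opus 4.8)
The plan is to follow the proof of Theorem~3.9 of~\cite{10.1145/3564246.3585148}, bootstrapping the interior-point / median primitive $\calA_{int}$ of the previous lemma into a single data structure that answers all threshold queries simultaneously, via the Reorder--Slice--Compute (RSC) framework of that paper. At a high level: recursively partition the sorted data into contiguous ``rank slices'', compute a landmark point near the median of each slice with $\calA_{int}$, and combine these landmarks into a private CDF sketch; the subtlety is the privacy accounting, which must not pay a factor equal to the recursion depth.

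\textbf{Construction and utility.} Let $\alpha = \Theta\!\big(\tfrac{\log^*(n)}{\epsilon_0}\log(1/\delta_0)^2\big)$ be the rank accuracy of $\calA_{int}$ run with budget $(\epsilon_0,\delta_0)$. On a slice, run $\calA_{int}$ to release a point near its median, use it to split the slice into a ``below'' part and an ``above'' part, recurse, and stop once a slice has size $O(\alpha)$. A threshold query $z$ is answered by binary search down this structure, returning the nominal rank of the leaf slice containing $z$, divided by $n$. Its rank error is bounded by the true size of that leaf slice plus the gap between the nominal and true ranks of its delimiting landmark; the point making this $\tilde O(\alpha)$ rather than $\tilde O(\mathrm{depth}\cdot\alpha)$ is that descending into a ``below'' child introduces no new error, since the rank of $z$ inside that slice is exactly $\rank_X(z)$, so only the landmarks actually used to localize $z$ contribute — and the construction of~\cite{10.1145/3564246.3585148} keeps the number of private sub-computations, hence the error accumulation, small. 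A union bound over all landmark-accuracy events and the internal $O(\log^* n)$-fold failure of $\calA_{int}$ gives overall success probability $1-\beta$ after shrinking $\delta_0$ and the per-call failure appropriately.

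\textbf{Privacy and parameters.} Within one recursion level the slices are disjoint, so parallel composition charges only a single copy of $(\epsilon_0,\delta_0)$; the danger is that chaining levels by ordinary sequential composition would force $\epsilon_0 = \Theta(\epsilon/\mathrm{depth})$ and inject a spurious depth factor into $\alpha$. The RSC composition theorem of~\cite{10.1145/3564246.3585148} avoids this: since two neighbouring datasets differ in one point, which occupies exactly one slice per level, and since the reordering step makes slice contents indistinguishable, the whole recursion is — for privacy — equivalent to $O(1)$ invocations of $\calA_{int}$ at a single $(\epsilon_0,\delta_0)$, up to a $\mathrm{poly}\log\log$ overhead. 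Taking $\epsilon_0 = \Theta(\epsilon)$ and $\delta_0 = \tilde\Theta(\beta\delta)$ (the number of private sub-computations, polynomial in $\log^* b$ up to the depth, being absorbed into the union bound), substituting into $\alpha$, dividing the rank error by $n$, and using $\log^*(n) \le \log^*(b)+O(1)$ gives the claimed error $O\!\big(\tfrac{\log^*(b)\log(\log^*(b)/(\beta\delta))^2}{\epsilon n}\big)$.

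\textbf{Main obstacle.} The heart of the argument is the privacy accounting via RSC, so that the recursion depth appears only inside logarithms and not as a multiplicative factor, together with the matching utility claim that the landmark errors relevant to any one query do not accumulate beyond $\tilde O(\alpha)$; both are exactly what the analysis of~\cite{10.1145/3564246.3585148} delivers, leaving us with the bookkeeping of re-deriving their bounds with $\delta$ replaced by $\Theta(\beta\delta/\log^* b)$ and the error expressed as a fraction of $n$.
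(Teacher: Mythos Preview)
The paper does not prove this lemma at all: it is stated in the appendix as ``Adapted from Theorem~3.9 of~\cite{10.1145/3564246.3585148}'' and simply imported as a black box, with the only follow-up being the one-line remark that binary search turns threshold-query answers into quantile estimates. So there is no ``paper's own proof'' to compare against; your proposal is a sketch of the argument from the cited source, not an alternative to anything in this paper.

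As a sketch of the cited result, your outline is in the right spirit (recursive slicing, interior-point landmarks, RSC-style privacy accounting to avoid a depth factor), but a few points are loose or not quite right. First, the utility claim that ``descending into a `below' child introduces no new error'' is not how the accumulation is controlled: in the actual construction the landmark errors along the root-to-leaf path \emph{do} add, and the point is that the recursion depth is $O(\log n)$ (or $O(\log(n/\alpha))$), which is absorbed into the $\log^2$ factor, not that only $O(1)$ landmarks contribute. Second, RSC does not make the whole recursion equivalent to ``$O(1)$ invocations'' of $\calA_{int}$; it makes a sequence of $k$ slice-and-compute rounds cost roughly $(\epsilon_0, k\delta_0)$ rather than $(k\epsilon_0, k\delta_0)$, so the depth still enters the $\delta$ accounting and hence the $\log(1/\delta)$ terms, which is exactly why the final bound has $\log(\log^*(b)/\beta\delta)^2$ rather than $\log(1/\delta)^2$. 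If you intend this as more than a pointer to~\cite{10.1145/3564246.3585148}, those two steps need to be stated correctly.
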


Using binary search, this algorithm may be used to answer any set of $Q$ quantiles to within error $O(\frac{\log^*(b) \log(\frac{\log^*(b)}{\beta\delta})^2}{\epsilon n})$.

\section{Better Computation of the Maximum Error for Continual Counting}
\label{appendix: better computation}
\begin{figure}[t]
    \centering
    \includegraphics[width=0.6\linewidth]{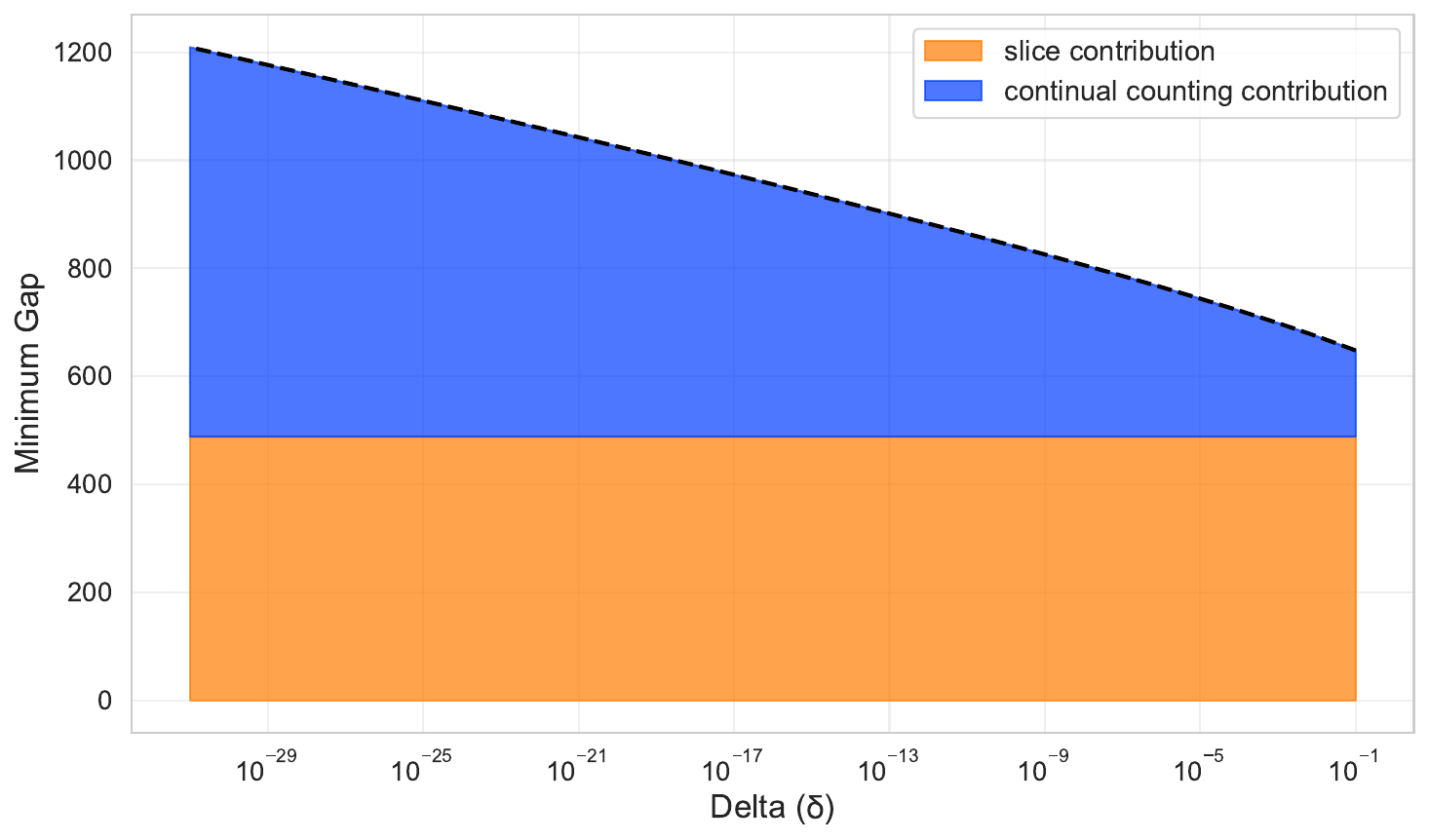}
    \caption{\small The plot illustrates the minimum gap on the input ranks that is required to achieve \((1,\delta)\)-differential privacy. Two distinct contributions are visible: the slice contribution and the continual counting contribution. The slice contribution corresponds to the minimum slice size required to ensure that all \(\mathsf{SingleQuantile}\) instances succeed with probability at least \(0.95\). Importantly, this term is independent of \(\delta\). In contrast, the continual counting contribution grows as \(\delta\) decreases. The computation considers \(100\) quantiles and add/remove privacy.
}
    \label{fig: min_gap vs delta}
\end{figure}

In this section, we describe the procedure used to compute the maximum absolute error of the continual counting mechanism used in our experiments. The analysis relies on applying a Chernoff bound to the mechanism and subsequently determining numerically the value that minimizes the error.

The mechanism under consideration is a variant of the approach developed in \cite{andersson2024count}, which makes use of a $k$-ary tree structure to introduce correlated noise. Our modification lies in sampling the noise from a discrete Laplace distribution. Let $k > 0$, the work in \cite{andersson2024count} provides guidelines for choosing $k$ in order to minimize the worst-case variance, and let $T = \left \lceil \log_k(m + 1)\right \rceil$ denote the depth of the tree. The noise at each node of the tree is $\eta_i \sim \texttt{DL}(b)$ where $b = e^{-\varepsilon/T}$. 
The probability mass function of the discrete Laplace distribution is given by
\begin{equation*} 
\Pr_{y\sim \texttt{DL}(b)}[y = x] = \frac{1-b}{1+b}b^{|x|}.
\end{equation*}
Each continual counting noise $Z_i$, for $i \in [m]$, is the sum of at most $T$ discrete Laplace noises. Using a Chernoff bound, we get for any $\lambda > 0$
\begin{equation*}
    \Pr\left[|Z_i|\geq \mathcal{E}\right] \leq 2e^{-\lambda \mathcal{E} + T\log(M_{\eta}(\lambda))}
\end{equation*}
where $M_{\eta}(\lambda)$ is the moment generating function of $\eta$ which is sampled from $\texttt{DL}(b)$ (see \cite{inusah2006discrete})
\begin{equation*}
    M_{\eta}(\lambda) = \frac{(1-b)^2}{(1-e^{-\lambda}b)(1-e^\lambda b)}.
\end{equation*}
By using a union bound over $m$ continual counting noises we obtain
\begin{equation*}
    \Pr\left[\max_{i \in [m]}|Z_i| \geq \mathcal{E}\right]\leq 2 m e^{-\lambda \mathcal{E} + T\log(M_{\eta}(\lambda))} = \delta.
\end{equation*}
Given $\delta >0$, finding $\lambda \in (0, -\log b)$ (so that the moment generating function is positive) such that $\mathcal{E}$ is minimum cannot be solved analytically.
Our linear search uses $100$ different $\lambda$ in $[10^{-6}, -\log(b)\cdot 0.99]$ with an equal space and for each computes $\mathcal{E}(\lambda)$
\begin{equation*}
    \mathcal{E}(\lambda) = \frac{\log(2m/\delta) + T\log(M_{\eta}(\lambda))}{\lambda},
\end{equation*}
the minimum error is then released.

\subsection{Relation Between $\delta$ and Minimum Gap}

To apply \texttt{SliceQuantiles} it is necessary that the input ranks are $r_1, \dots, r_m\in \good_{m,n,w+h}$. Thus, the minimum gap between ranks must be $\min_{i\neq j}|r_i-r_j|>2(w+h)$. While $h$ can be computed using Theorem~\ref{th: EM with slice}, $w$ is computed following the procedure illustrated in the previous section. Figure \ref{fig: min_gap vs delta} shows the minimum gap required, $2(w+h)$, for different values of $\delta$. The results consider the case of add/remove privacy, $\varepsilon=1$ and $m=100$ (number of quantiles).

\section{Additional Experimental Material}\label{app:experiments}
\label{appendix: experiments}
\begin{figure}[t]
  \centering
  \begin{minipage}{0.48\textwidth}
    \centering
    \includegraphics[width=\linewidth]{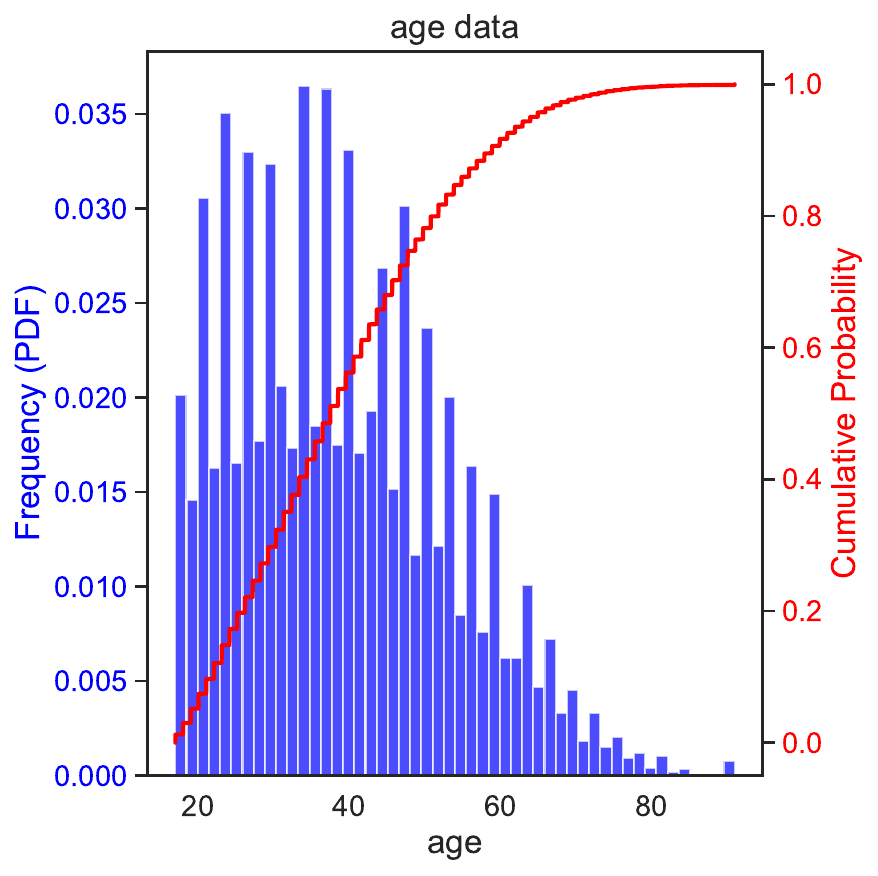}
    \label{fig:figura1}
  \end{minipage}%
  \hfill
  \begin{minipage}{0.48\textwidth}
    \centering
    \includegraphics[width=\linewidth]{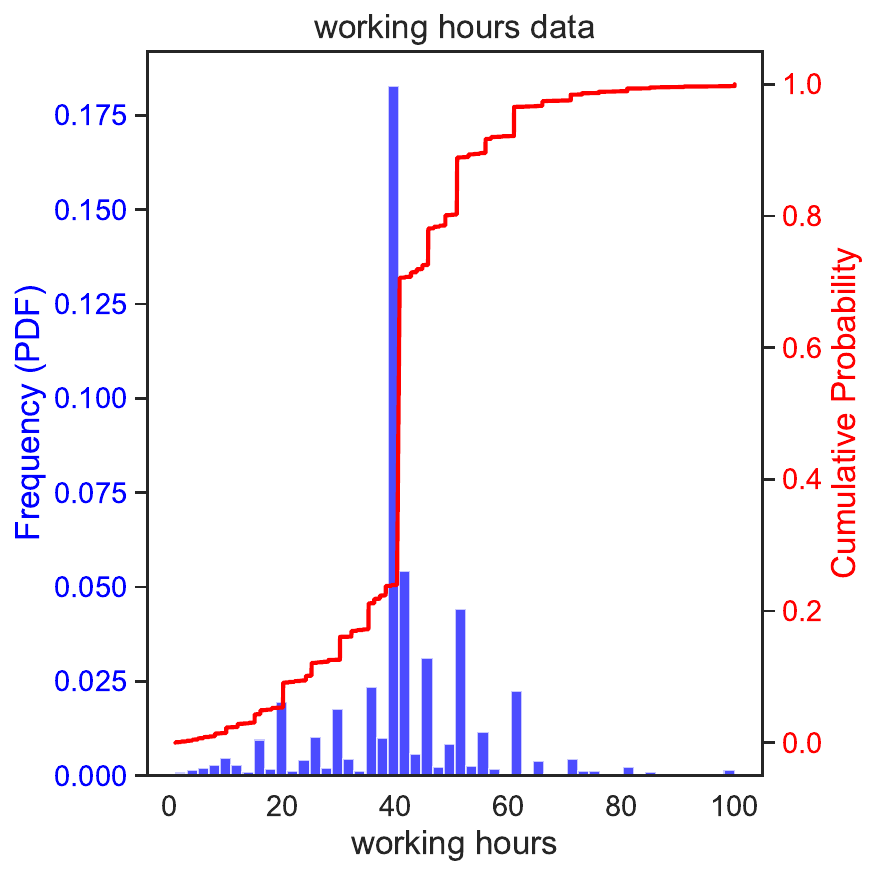}
    \label{fig:figura2}
  \end{minipage}
  \caption{\small Histogram representation and cumulative distribution of \text{AdultAge} and \texttt{AdultHours} after pre-processing (data augmentation, Gaussian noise addition, and translation).}
  \label{fig: histogram}
\end{figure}

In this section we give further experimental results. In Figure~\ref{fig: histogram} the density and the cumulative distribution of the two datasets are depicted. The distributions are shown after data pre-processing, which accounts for data augmentation to increase the minimum gap among ranks, the insertion of low variance Gaussian noise to ensure uniqueness of the data points, and translation, thus the addition of $i/n$ (where $n$ is the size of the augmented dataset) to each point $x_{(i)}$ so to guarantee that $\min_{i\neq j}|x_i-x_j|\geq 1/n$. This last features allows to set $g=1/n$ when computing the slicing parameter using Theorem~\ref{th: EM with slice}.

As a new baseline, we include the histogram density estimator algorithm, denoted as $\mathsf{Hist}$, introduced in \cite{lalanne2023private}. 
This algorithm employs a differentially private estimate of the cumulative distribution function, obtained by injecting Laplace noise into a histogram representation of the dataset, to compute quantiles. 
Although the algorithm is conceptually simple, it requires the bin size to be determined in advance, which directly influences the utility of the resulting estimates. 
Given that the dataset bounds are known, achieving uniform bin sizes reduces to selecting the number of bins. 
In these experiments, we consider three configurations for the number of bins: $\frac{N}{10}$, $\frac{N}{2}$, and $N$.

We run the same experiments with additional privacy budget $\varepsilon=0.5$ and $\varepsilon=5$, to study the behavior in the small and high privacy regime. Figure \ref{fig: additional experiments} depicts these experiments, showing that, if the data set is sufficiently large, $\mathsf{SliceQuantiles}$ achieves smaller error than $\mathsf{AQ}$ from \cite{DBLP:conf/icml/KaplanSS22}. In contrast, the performance of $\mathsf{Hist}$ varies depending on the chosen number of bins, yet it consistently exhibits an error that is approximately one order of magnitude higher.

\begin{figure}[t]
   
    \centering
    \begin{subfigure}{\linewidth}
        \centering
        \includegraphics[width=\linewidth]{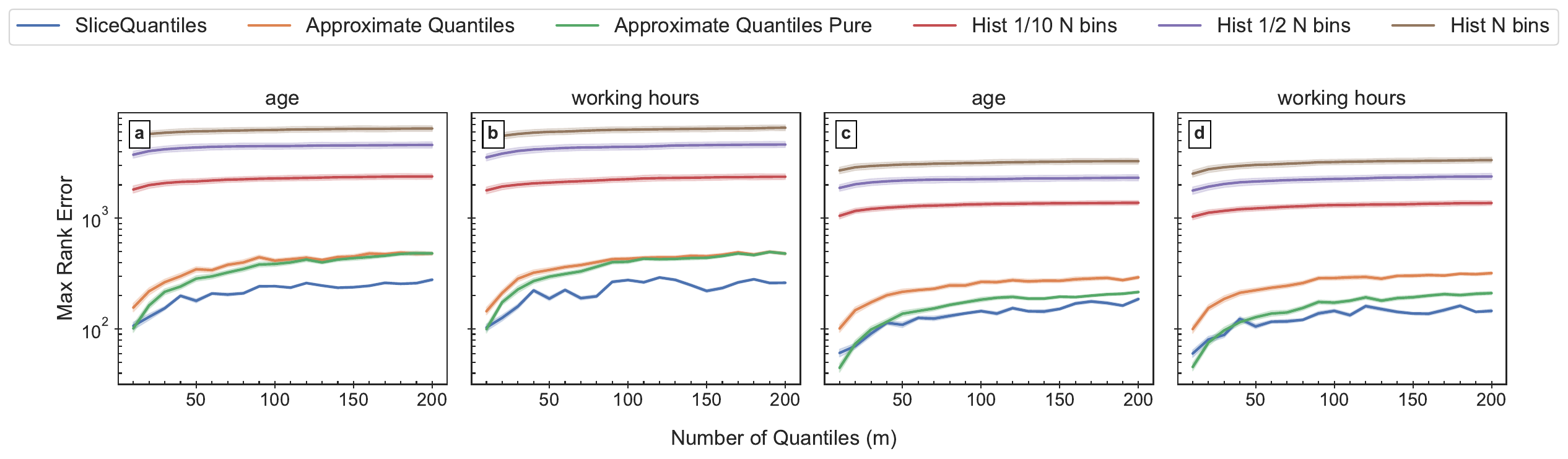}
        \caption{\small Experiments with $(0.5, 10^{-16})$-DP for $\mathsf{SliceQuantiles}$ and $\mathsf{AQ}$, while $(0.5, 0)$-DP for $\mathsf{AQ}$ with pure DP accounting and $\mathsf{Hist}$. 
        Such small privacy budget requires a large minimum gap between ranks, thus, we augmented the dataset $24$ times obtaining 1172208 data points. Plots $a$ and $b$ are for substitute adjacency, while $c$ and $d$ correspond to add/remove adjacency.}
        \label{fig:eps0.5}
    \end{subfigure}

    \vspace{1em} 

    \begin{subfigure}{\linewidth}
        \centering
        \includegraphics[width=\linewidth]{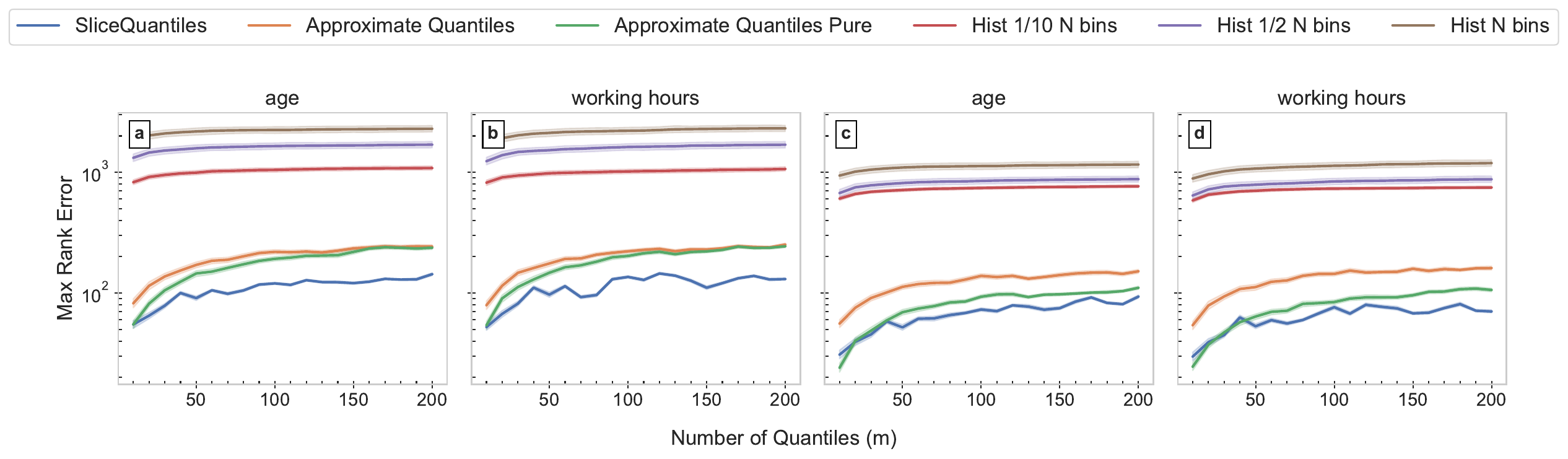}
        \caption{\small Experiments with $(1, 10^{-16})$-DP for $\mathsf{SliceQuantiles}$ and $\mathsf{AQ}$, while $(1, 0)$-DP for $\mathsf{AQ}$ with pure DP accounting and $\mathsf{Hist}$. 
        For these privacy budget we have to increase the dataset $12$ times obtaining 586104 data points. Plots $a$ and $b$ are for substitute adjacency, while $c$ and $d$ correspond to add/remove adjacency.}
        \label{fig:eps1}
    \end{subfigure}

    \vspace{1em} 

    \begin{subfigure}{\linewidth}
        \centering
        \includegraphics[width=\linewidth]{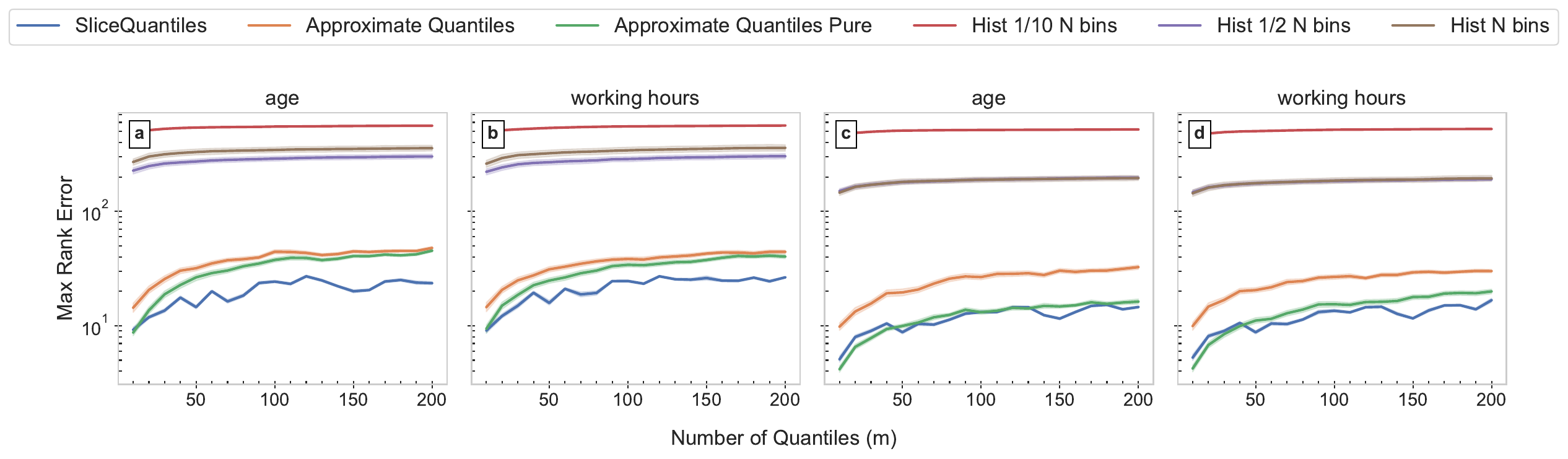}
        \caption{\small Experiments with $(5, 10^{-16})$-DP for $\mathsf{SliceQuantiles}$ and $\mathsf{AQ}$, while $(5, 0)$-DP for $\mathsf{AQ}$ with pure DP accounting and $\mathsf{Hist}$. 
        This privacy budget allows a small minimum gap between ranks, thus, allowing us increase the dataset only $6$ times obtaining 293052 data points. Plots $a$ and $b$ are for substitute adjacency, while $c$ and $d$ correspond to add/remove adjacency.}
        \label{fig:eps5}
    \end{subfigure}

    \caption{\small Comparison of $\mathsf{SliceQuantiles}$, $\mathsf{AQ}$ and $\mathsf{Hist}$. 
    }

    \label{fig: additional experiments}

\end{figure}

\end{document}